\newtheorem{remark}{Remark}
\newtheorem{Proposition}{Proposition}
\newtheorem{lemma}{Lemma}
\newtheorem{corollary}{Corollary}
\def\blfootnote{\xdef\@thefnmark{}\@footnotetext}
\begin{document}
	
		\title{\huge{Physical Layer Security Performance of \textcolor{black}{Cooperative} Dual RIS-aided V2V NOMA Communications}} 
\author{Farshad~Rostami~Ghadi, \IEEEmembership{Member},  \textit{IEEE},~Masoud~Kaveh,~Kai-Kit~Wong, \IEEEmembership{Fellow}, \textit{IEEE}, and Diego~Mart\'in}
	\maketitle
	
	\begin{abstract}
This paper investigates the performance of physical layer security (PLS) in a vehicle-to-vehicle (V2V) communication system, where a transmitter vehicle exploits a dual reconfigurable intelligent surface (RIS) to send confidential information to legitimate receiver vehicles under the non-orthogonal multiple access (NOMA) scheme in the presence of an eavesdropper vehicle. In particular, it is assumed that an RIS is near the transmitter vehicle and another RIS is close to the receiver vehicles to provide a wider smart radio environment. Besides, we suppose that the channels between two RISs suffer from the Fisher-Snedecor $\mathcal{F}$ fading model. Under this scenario, we first provide the marginal distributions of equivalent channels at the legitimate receiver vehicles by exploiting the central limit theorem (CLT). Then, in order to evaluate the PLS performance of the considered system model, we derive analytical expressions of the average secrecy capacity (ASC), secrecy outage probability (SOP), and secrecy energy efficiency (SEE) by using the Gauss-Laguerre quadrature and the Gaussian quadrature techniques. Moreover, to gain more insights into the secrecy performance, the asymptotic expression of the ASC is obtained. The numerical results indicate that incorporating the dual RIS in the secure V2V communication under the NOMA scheme can significantly provide  ultra-reliable transmission and guarantee more secure communication for intelligent transportation systems (ITS). 
	\end{abstract}
	\begin{IEEEkeywords}
		V2V secure communications, RIS, NOMA, Fisher-Snedecor $\mathcal{F}$ fading channels, secrecy performance.
	\end{IEEEkeywords}
	\maketitle
	\blfootnote{\noindent 
	}
	\blfootnote{\noindent The work of F. Rostami Ghadi and K. K Wong is supported by the Engineering and Physical Sciences Research Council (EPSRC) under Grant EP/W026813/1. The work of D. Mart\'in is supported in part by the Project PRESECREL under Grant PID2021-124502OB-C43, and in part by the Ministerio de Ciencia e Investigación (Spain), in relation to the Plan Estatal de Investigación Científica y Técnica y de Innovación 2017-2020. For the purpose of open access, the authors will apply a Creative Commons Attribution (CC BY) licence to any Author Accepted Manuscript version arising.}
	\blfootnote{\noindent Farshad Rostami Ghadi and Kai-Kit Wong are with the Department of Electronic and Electrical Engineering, University College London, WC1E
		6BT London, UK. (E-mail: $\rm \{f.rostamighadi,  kai-kit.wong\}@ucl.ac.uk$).}
		\blfootnote{\noindent Masoud Kaveh is with the Department of Information and Communication Engineering, Aalto University, 02150 Espoo, Finland (e-mail: $\rm masoud.kaveh@aalto.fi$).}
	\blfootnote{\noindent Diego Mart\'in is with the Department of Computer Science, Escuela de Ingeniería Informática de Segovia, Universidad de Valladolid, Segovia, 40005, Spain (e-mail:diego.martin.andres@uva.es).
			
		\thanks{Corresponding author: Diego Mart\'in.}
	
	\thanks{Digital Object Identifier 10.1109/XXX.2021.XXXXXXX}}
	
	\section{Introduction}\label{sec-intro}
	\IEEEPARstart{T}{he} advent of the sixth generation (6G) of wireless communication systems heralds an era where the seamless interconnectivity of devices is paramount \cite{guo2022vehicular}. Among the various types of connections, vehicular communications stand out for their critical role in enhancing road safety, traffic efficiency, and the provision of infotainment services \cite{wang2021green}. Vehicle-to-everything (V2X) communications, a subset of which includes vehicle-to-vehicle (V2V) interactions, are essential components of future intelligent transportation systems (ITS). These systems utilize the exchange of information between vehicles and various entities in the traffic ecosystem to support autonomous driving and smart mobility solutions \cite{noor20226g,kaveh2020lightweight,huang2022non}.
	In the pursuit of hyper-connectivity and intelligent network solutions proposed for 6G, reconfigurable intelligent surfaces (RIS) has appeared as a transformative technology \cite{basar2019wireless}. RIS-aided communication networks can intelligently alter the propagation environment to enhance signal quality, extend coverage, and increase spectral and energy efficiency \cite{basharat2021reconfigurable}. Moreover, to broaden the utility of RISs across the complete spatial domain, the concept of simultaneously transmitting and reflecting RISs
(STAR-RISs) that possess the ability to concurrently transmit and reflect signals has been suggested.
\cite{mu2021simultaneously,palomares2023enabling,liu2021star}. 	
 As vehicular networks become more interconnected, the security of transmitted information becomes increasingly vulnerable to interception and unauthorized access \cite{lu2018survey}. 
 It has been a significant challenge for maintaining the secrecy of sensitive data.
	Physical layer security (PLS) approaches have been advocated to exploit the inherent randomness of communication channels to enhance security, providing an additional layer of protection that complements upper-layer encryption methods \cite{shiu2011physical}.
	The performance of PLS is primarily determined by the unpredictability and variability of the wireless channel. This is due to the fact that the security in PLS is achieved by ensuring that the signal quality at the legitimate receiver is significantly better than that at the potential eavesdropper. 
	Recently, it has been shown that the PLS performance is greatly enhanced by advanced technologies such as RIS \cite{zhang2021physical,vega2022physical,ghadi2022ris,kaveh2023secrecy,tang2021novel}. RIS can intelligently manipulate the propagation environment to improve the signal quality at the legitimate receiver while degrading it at the eavesdropper. This not only strengthens the security of the communication but also improves the overall spectral efficiency of the network.
	Moreover, with the evolution of multiple access schemes, non-orthogonal multiple access (NOMA) has emerged as a superior alternative to traditional orthogonal multiple acess (OMA) techniques \cite{saito2013non}. NOMA leverages the power domain for user multiplexing, allowing multiple users to share the same frequency resources but with different power levels, thereby increasing spectral efficiency and supporting massive connectivity--a key requirement in dense V2V networks \cite{di2017v2x}.
	\subsection{Related Works}
	There have been great efforts in recent years focusing on RIS-aided communications in vehicular networks for various scenarios. 
 The authors in \cite{gu2022socially} evaluated the optimization of socially aware V2X networks augmented with RIS technology. They focused on maximizing the sum capacity of vehicle-to-infrastructure (V2I) links under the constraints of reliable V2V communications, through joint optimization of power allocation, RIS reflection coefficients, and spectrum allocation. Their proposed algorithm demonstrates significant improvements in vehicular communication quality, illustrating the transformative impact of RIS in enhancing vehicular network performance. 
 In \cite{chen2021robust}, the authors investigated robust transmission strategies for RIS-aided millimeter wave (mmWave) vehicular communications. They proposed a transmission scheme that adapts to quickly changing channel state information (CSI), crucial for high-mobility scenarios. The focus on statistical CSI rather than instantaneous CSI addresses the challenge of high signaling overhead in RIS-aided systems. Their results confirmed the effectiveness of their approach, highlighting the potential of RIS in elevating mmWave vehicular communication under dynamic conditions. 
 By considering the phase error and mobility of vehicles in RIS-assisted communication systems, the author in \cite{chapala2023intelligent} employed a new scheme
	to demonstrate the distribution of the resultant channel and then derived analytical expressions of average bit-error-rate (BER) under the generalized-$\mathcal{K}$ shadowed fading distribution. The authors in  \cite{kavaiya2023restricting} focused on mitigating passive attacks in 6G vehicular networks from a PLS perspective. They presented analytical expressions of OP and secrecy outage probability (SOP), considering vehicular to infrastructure scenarios aided by RIS. Their findings suggested that RIS can enhance the secrecy performance more effectively than traditional access points, indicating its pivotal role in securing future vehicular networks against eavesdropping threats. 
 In \cite{mensi2022performance}, the authors explored the performance of partial RIS selection versus partial relay selection in vehicular communications. They derived closed-form expressions for different performance metrics, such as EC and average secrecy capacity (ASC). Their comparison showed the advantages of partial RIS selection over single RIS architecture and traditional relay systems, especially in high mobility and high quality of service (QoS) requirement scenarios of wireless vehicular networks. 
 The secrecy performance of RIS-aided vehicular communications by considering both V2V and V2I scenarios was analyzed in \cite{ai2021secure}. The authors derived closed-form expressions of the SOP, confirming the potential of RIS in enhancing secure communications in vehicular networks. Their work underscored the versatility of RIS in different vehicular communication scenarios, bolstering the security against the passive eavesdroppers. \textcolor{black}{Besides, the authors in \cite{gu2022physical} analyzed the SOP and ASC for different two scenarios in RIS-aided secure communication systems, where one is that the eavesdropper distributes close to the transmitter without the RIS orientation, and the other is that the eavesdropper locates close to the legitimate receiver and in the presence of the RIS. Moreover, by assuming two models that the full CSI and statistical CSI are available at the transmitter in a secure STAR-RIS-aided NOMA communication system, the sum-rate maximization and SOP minimization problems were evaluated in \cite{zhang2023star}. }

\textcolor{black}{Aiming to cover more users and enhance the QoS, a multi-RIS wireless system was considered in \cite{do2021multi}, where the authors derived tight approximate closed-form expressions for the OP and ergodic capacity (EC), exploiting the central limit theorem (CLT). Their results indicated the superiority of using the exhaustive RIS-aided scheme over the opportunistic RIS-aided model. 
In addition, an optimization problem was evaluated in \cite{zhao2021cooperative} to minimize the mean-squared error of the recovered data in the presence of timing offsets and phase shifts in a multi-RIS-aided communication system. 
Assuming that the direct link between the transmitter and receivers exists, the author in \cite{phan2022performance} derived the closed-form expression for symbol error probability (SEP) in a multi-RIS-aided communication system under Nakagami-$m$ fading channels, where their results showed that considering the multi-RIS is more beneficial compared with using the single-antenna system.  Additionally, in \cite{kumar2023performance}, the multi-RIS scenario was applied to the D2D communication system under the NOMA scheme, where the outage and throughput performance were theoretically analyzed over the Rician fading channels. In this regard, numerical results revealed the significant advantages of using the multi-RIS model in comparison to the single-RIS model in terms of the OP and system throughput. 
By assuming an unmanned aerial vehicle (UAV) as a transmitter, compact analytical expressions of the OP and achievable data rates (ADR) were also derived in \cite{nguyen2023performance} for a multi-RIS-aided NOMA system under Nakagami-$m$ fading channels. Moreover, based on the effect of transceiver hardware impairments, the authors in \cite{tran2022exploiting} obtained the OP, throughput, ADR, and SEP expressions of the multiple-RIS-assisted wireless systems over Nakagami-$m$ fading channels.} 

\textcolor{black}{The aforesaid contributions mainly analyzed the performance and optimization problems for single-RIS or non-cooperative multi/dual RIS-aided communication setups. However, analyzing a cooperative multi/dual RIS-aided scheme has received attention over recent years to extend the coverage region and support cell-edge users. For this purpose, \cite{ma2022cooperative} considered a cooperative multi-RIS scenario to design beamforming, solving a sum-rate optimization problem. In this model, the multiple RISs cooperate with each other in order to assist the downlink communications from a transmitter to its users. Following the proposed model in \cite{ma2022cooperative}, the authors in \cite{zhang2023double} considered a cooperative dual-RIS-aided multi-user multiple-input multiple-output (MIMO) communication system, where the mean-square-error (MSE) minimization problem by jointly optimizing the active transmit beamforming, the receive equalizer, and the passive beamforming at each RIS was investigated. 
 Moreover, a cooperative double-RIS-aided multi-user mmWave uplink communication system was considered in \cite{xue2023multi}, aiming to maximize the system throughput by jointly optimizing multi-user transmission power, active beamforming, and passive beamforming. Besides, \cite{ma2023power} considered a double cooperative-RIS-assisted uplink NOMA system with inter-RIS reflections and then formulated an optimization problem for minimizing the total transmission power for the proposed model.}
 From the performance analysis viewpoint, by examining V2I communication aided by a dual RIS, the authors in \cite{shaikh2022performance} derived analytical expressions for the OP, energy efficiency (EE), and spectral efficiency (SE) under Nakagami-$m$ fading channels. Their results indicated that considering the cooperative dual RIS instead of single RIS can significantly enhance the system performance in  ITS. 
 Moreover, by considering both NOMA and OMA schemes, the authors in \cite{ghadi2023performance} evaluated the performance of  cooperative RIS/STAR-RIS-assisted V2V communications, where they obtained closed-form expressions of the OP, EC, and EE under Fisher-Snedecor $\mathcal{F}$ fading channels, exploiting the CLT. They revealed that the NOMA approach under the consideration of RIS and STAR-RIS can provide better performance in comparison with OMA scenario for ITS. 
 \textcolor{black}{Table \ref{table_1} shows the differences between previous studies and our research, especially highlighting that this is the first work that studies the PLS performance of (cooperative) dual RIS-aided communication system.}

\begin{table*} [t]
    \centering
    \begin{threeparttable}
    \caption{\textcolor{black}{Comparison between the Related Works and Our Work} }
    \label{table_1}
    {\color{black} 
    \begin{tabular}{c|cccccccc}
    \hline \hline
      Works   & Single RIS & Multi/dual RIS & Cooperative multi/dual RIS & V2V & PLS & NOMA & Performance & Optimization \\
    \hline
    \cite{gu2022socially,chen2021robust}  & \checkmark & $\times$ & $\times$ & \checkmark & $\times$  & $\times$ & $\times$ & \checkmark \\
    \hline
      \cite{chapala2023intelligent,mensi2022performance} & \checkmark & $\times$ & $\times$ & \checkmark & $\times$  & $\times$ & \checkmark & $\times$ \\
    \hline
       \cite{kavaiya2023restricting,ai2021secure}  & \checkmark & $\times$ & $\times$ & \checkmark & \checkmark  & $\times$ & \checkmark & $\times$ \\
    \hline
        \cite{gu2022physical} & \checkmark & $\times$ & $\times$ & $\times$ & \checkmark  & $\times$ & \checkmark & $\times$  \\
    \hline
        \cite{zhang2023star} & \checkmark & $\times$ & $\times$ & \checkmark & \checkmark  & \checkmark & $\times$ & \checkmark  \\
    \hline
       \cite{do2021multi,tran2022exploiting} & $\times$ & \checkmark & $\times$ & \checkmark & $\times$  & $\times$ & \checkmark & $\times$  \\
         \hline
        \cite{zhao2021cooperative,phan2022performance} & $\times$ & \checkmark & $\times$ & $\times$ & $\times$  & $\times$ & $\times$ & \checkmark \\
        \hline
        \cite{kumar2023performance,nguyen2023performance} & $\times$ & \checkmark & $\times$ & $\times$ & $\times$  & \checkmark & $\times$ & \checkmark  \\
        \hline
        \cite{ma2022cooperative,zhang2023double,xue2023multi} & $\times$ & $\times$ & \checkmark & $\times$ & $\times$  & $\times$ & $\times$ & \checkmark \\
        \hline
        \cite{ma2023power} & $\times$ & $\times$ & \checkmark & $\times$ & $\times$  & \checkmark & $\times$ & \checkmark \\
        \hline
        \cite{shaikh2022performance} & $\times$ & $\times$ & \checkmark & \checkmark & $\times$  & $\times$ & \checkmark & $\times$ \\
        \hline
        \cite{ghadi2023performance} & $\times$ & $\times$ & \checkmark & \checkmark & $\times$  & \checkmark & \checkmark & $\times$ \\
    \hline
        Ours & $\times$ & $\times$ & \checkmark & \checkmark & \checkmark  & \checkmark & \checkmark & $\times$  \\
    \hline
    \end{tabular}
    } 
    \begin{tablenotes}
        \item \textcolor{black}{ \checkmark: Item is supported, $\times$: Item is not supported.}
    \end{tablenotes}
\end{threeparttable}
\end{table*}

	\subsection{\textcolor{black}{Motivation and Contributions}}
	\textcolor{black}{It is widely known one of the key reasons of using the RIS is to extend the coverage region, especially in a case where the direct links between the transmitter and receivers are blocked due to obstacles. In this case, a single RIS-aided communication system can be beneficial to support those receivers. However, especially in urban dynamic ultra-dense cellular networks such as V2V, V2I, and device-to-device (D2D) communications,  mobile users or vehicles can be located very far from the transmitter or even the RIS so that the links between the corresponding RIS and them (cell-edge users/vehicles) are blocked and the RIS cannot cover those users. In such a likely practical scenario, considering additional RIS (two or more) to serve the far users can be beneficial since each RIS can cooperate with its previous nearest RIS to extend the network coverage area. This model mainly referred as \textit{cooperative} multi/dual RIS-aided communication systems.} Motivated by the significant impact of using \textcolor{black}{cooperative} RISs in extending coverage area and providing high spectral/energy efficiency over the next generation of intelligent V2V communications, the remarkable advantages of exploiting PLS to guarantee secure and reliable transmission, and the key role of Fisher-Snedecor $\mathcal{F}$ distribution in precise modeling and characterization of the simultaneous occurrence of multipath fading and shadowing, in this paper, we consider a \textcolor{black}{cooperative} dual RIS-aided V2V secure communication under the NOMA transmission scheme, where the fading channels is modeled as Fisher-Snedecor $\mathcal{F}$ distribution. \textcolor{black}{It is worth mentioning that this proposed model is totally different from the multi-RIS scenario defined in the literature. In the most previous multi-RIS communication systems, the RISs did not cooperate with each other to extend the coverage area; instead, a large number of RISs were deployed near each group of users so that each RIS could separately support those associated groups. In other words, for each group of users, one RIS is required, where it’s like the case that many single-RIS without cooperation are considered in the system model. Hence, in this case, the coverage region will not be extended in comparison with the single-RIS or cooperative dual-RIS models, and only more groups of users can be supported thanks to increasing the number of single-RIS. Therefore,} to the best of the author's knowledge, no prior studies have addressed the \textcolor{black}{PLS over cooperative dual RISs-aided vehicular communications under NOMA scheme.} 
 In particular, we assume a transmitter vehicle aims to send a confidential message to legitimate receivers by employing two RISs, while an eavesdropper vehicle tries to decode the information. 
 Moreover, we assume that the RISs are close to the legitimate and eavesdropper vehicles so that the corresponding links can be represented as deterministic line-of-sight (LoS) channels. 
 In this regard, we first derive the marginal distributions of the equivalent channels at legitimate receiver and eavesdropper vehicles, and then we develop analytical formulations for key secrecy performance metrics to assess the secrecy performance of the proposed system model. Hence, the key contributions of our study can be outlined as follows
	
	\textbullet\ \,First, we provide the marginal distributions of the signal-to-noise ratio (SNR) and signal-to-interference-plus-noise ratio (SINR) at the legitimate receiver and eavesdropper vehicles with the help of the CLT.
	
	\textbullet\ \,Next, we derive the analytical expressions of the ASC, SOP, and secrecy energy efficiency (SEE) by utilizing the Gauss-Laguerre quadrature and the Gaussian quadrature methods.
	
	\textbullet\ \,Then, to acquire deeper understanding of the performance of the obtained secrecy metrics, we derive the asymptotic expressions of the ASC in the high SNR regime.
	
	\textbullet\ \,Finally, we evaluate the secrecy performance of the considered system model in terms of the ASC, SOP, and SEE. To this purpose, we validate the validation of the analytical expressions with the Monte-Carlo simulation; namely, the numerical findings validate the effectiveness of incorporating dual RIS in V2V communication can significantly provide a more secure and reliable transmission for ITS.
	
	\subsection{Organization }
The reminder for this paper is organized as follows: Section \ref{sec-sys} describes the channel model and the statistical characterization. In Section \ref{sec-per}, the secrecy performance metrics are analyzed. The numerical results are also presented in Section \ref{sec-num}, and the conclusions are provided in Section \ref{sec-con}.
\section{System Model}\label{sec-sys}
 \subsection{Channel Model}
 We examine a secure wireless V2V communication within the framework of NOMA as illustrated in Fig. \ref{fig:fig1}, where a transmitter $v_\mathrm{t}$ aims to send a confidential signal \footnote{\textcolor{black}{In V2V communication systems, the types of confidential messages that usually require to be transmitted over a wiretap channel can vary depending on the specific application and context. However, confidential information such as safety-critical messages, traffic conditions, cooperative driving strategies, privacy-preserving location sharing, lane change and merging intentions, etc., are the most critical common information that needs to be protected from eavesdropping to prevent malicious manipulation or interference.}} with the help of two RISs to the legitimate receiver vehicles $v_{\mathrm{r}_i}$, $i\in\{1,2\}$, whereas an eavesdropper vehicle $v_\mathrm{e}$ attempts to decode the desired information \footnote{\textcolor{black}{In this scenario, we assume that the vehicles move at nearly constant speed or low speed (e.g., traffic conditions), which leads the constant Doppler frequency. By doing so, the effect of vehicle speed or the user mobility speed is mainly ignored
\cite{ai2021secure,shaikh2022performance}}}. We assume that the direct links between $v_\mathrm{t}$ and both $v_{\mathrm{r}_i}$ and $v_\mathrm{e}$ are blocked due to obstacles such as buildings, etc. Hence, we assume that the first RIS with $M_1$ elements is located near the transmitter vehicle $v_\mathrm{t}$ and the second RIS with $M_2$ elements is deployed close to the legitimate receivers $v_{\mathrm{r}_i}$ and eavesdropper $v_\mathrm{e}$. We also consider that the distance between $v_\mathrm{t}$ and the first RIS, $d_{v_\mathrm{t}R_1}$, as well as the distance between the second RIS and $v_{\mathrm{r}_i}$, $d_{R_2v_{\mathrm{r}_i}}$, and $v_\mathrm{e}$, $d_{R_2v_\mathrm{e}}$, are small, and thus, the respective channels can be appropriately modeled as determinist line-of-sight (LoS) channels. Meanwhile, we assume that the distance between two RISs, $d_\mathrm{R_1R_2}$, is large enough; thereby, the corresponding links between the $k$-th elements of the first RIS and the $l$-th element of the second RIS are quasi-static fading channels. According to the NOMA scheme, let the transmitter vehicle $v_\mathrm{t}$ exploits superposition coding, i.e., to simultaneously transmit the corresponding signals to $v_{\mathrm{r}_i}$ over the power domain. Specifically, without loss of generality, we consider $v_\mathrm{r_1}$ as the near receiver vehicle (i.e., strong receiver vehicle) and $v_\mathrm{r_2}$ as the far receiver vehicle (i.e., weak receiver vehicle). By doing so, more power is always allocated to the weak receiver vehicle $v_\mathrm{r_2}$, i.e., $p_\mathrm{r_2}\ge p_\mathrm{r_1}$, due to the user fairness, where $p_{\mathrm{r}_i}$ is the power allocation factor that satisfies $p_\mathrm{r_1}+p_\mathrm{r_2}=1$. Therefore, the received signal at the vehicle $\mathrm{n}\in\{\mathrm{r}_i,\mathrm{e}\}$ can be expressed as
 \begin{align}
 	y_\mathrm{n}=\sqrt{P}\left(\sum_{k=1}^{M_1}\sum_{l=1}^{M_2}\mathrm{e}^{j\phi_k}h_{k,l}\mathrm{e}^{j\psi_{\mathrm{n},l}}\hspace{-1mm}\right)\hspace{-1.5mm}\left(\sqrt{p_\mathrm{r_1}}x_\mathrm{r_1}+\sqrt{p_\mathrm{r_2}}x_\mathrm{r_2}\right)\hspace{-1mm}+z_\mathrm{n}\label{},
 \end{align}
 where $P$ is the total transmit power, $x_{\mathrm{r}_i}$ denotes the symbol sent by the transmitter vehicle $v_\mathrm{t}$ to $v_\mathrm{n}$, and $z_\mathrm{n}\sim\mathcal{CN}\left(0,\delta_\mathrm{n}^2\right)$ denotes the additive white Gaussian noise (AWGN) at the vehicle $v_\mathrm{n}$ that is modeled as a zero-mean complex Gaussian distribution with variance $\delta_\mathrm{n}^2$. Besides, the fading channel from the first RIS to the second RIS is defined by $h_{k,l}=\eta_{k,l}d_{\mathrm{R_1R_2}}^{-\kappa}\mathrm{e}^{-j\zeta_{k,l}}$, where $\eta_{k,l}$ and $\zeta_{k,l}$ denote the amplitude and the phase of $h_{k,l}$, respectively. The term $\kappa>2$ is also the path-loss exponent. Moreover, $\phi_k$ and $\psi_{\mathrm{n},l}$ are the variable phases generated by
 the $k$-th element of the first RIS and the $l$-th element of the second RIS \footnote{\textcolor{black}{In this work, without loss of generality, we assume that an ideal, sophisticated synchronization algorithm that dynamically adjusts the phase shifting of each RIS in response to real-time changes in channel conditions is exploited, ensuring that synchronization is maintained even amidst rapid fluctuations \cite{huang2019reconfigurable}.}}, respectively. Additionally, regarding the assumption of LoS between vehicles and RISs, the corresponding path-loss for a link distance $d$ at the carrier frequency of $3$ GHz can be expressed as \cite{bjornson2019intelligent}
 \begin{align}
 	\mathcal{D}\left(d\right)\left[\mathrm{dB}\right]=-37.5-22\log_{10}\left(d/1\mathrm{m}\right).
 \end{align}
  \begin{figure} [t]
 	\centering
 	\includegraphics[width=0.47\textwidth]{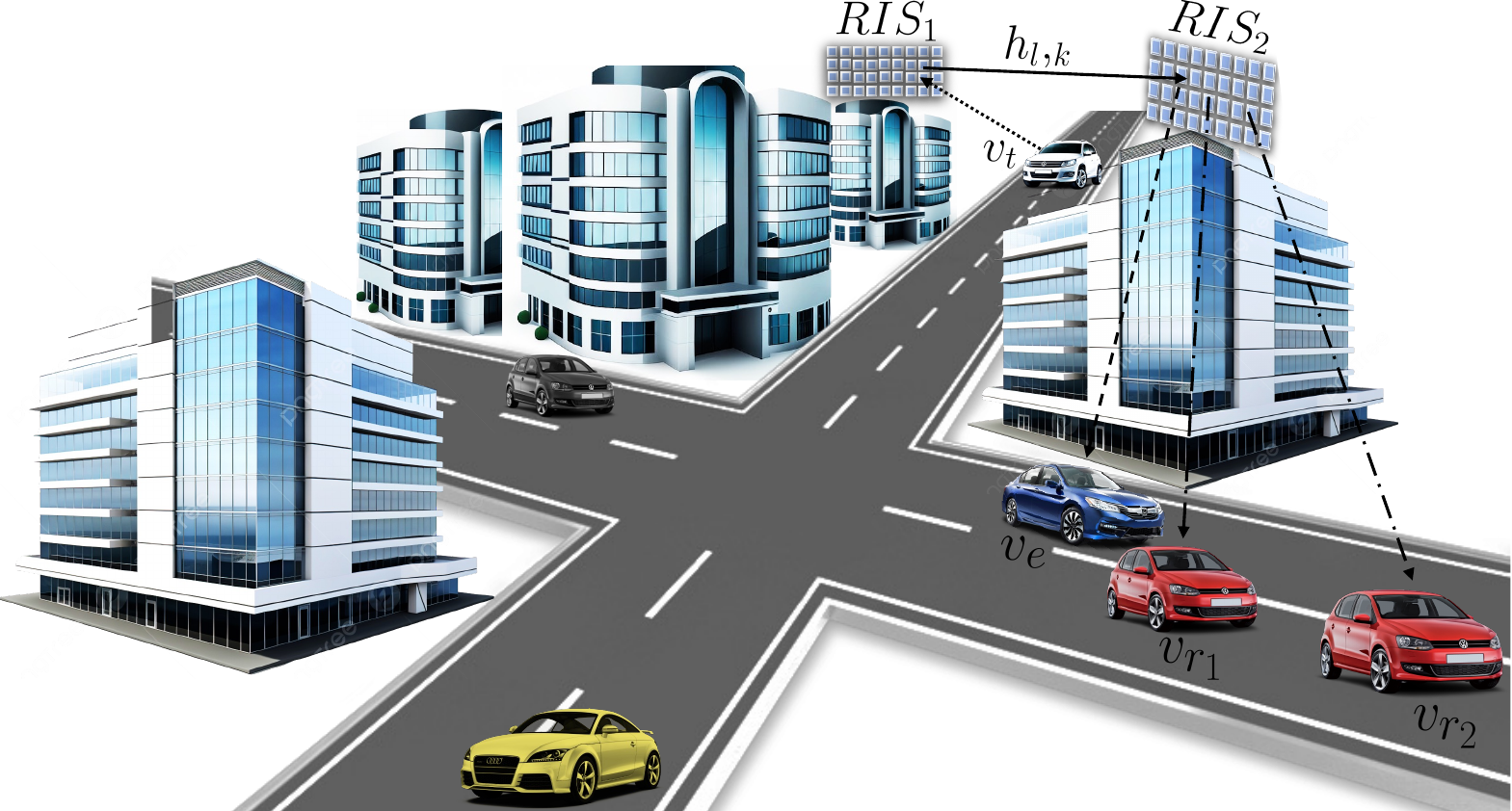}
 	\caption{System model indicates a secure dual RIS-aided V2V NOMA communication.}
 	\label{fig:fig1}
 \end{figure}
 \subsection{Instantaneous SNR/SINR}
 Regarding the NOMA scheme, the successive interference cancellation (SIC) is conducted at the strong receiver vehicle $v_\mathrm{r_1}$ to gain high rate performance, whereas the weak receiver vehicle $v_\mathrm{r_2}$ decodes its signal directly by considering interference as noise Hence, we can define the SINR of the SIC process as
 \begin{align}
 	\gamma_\mathrm{sic}=\frac{\bar{\gamma}p_{\mathrm{r}_2}\mathcal{D}d_{\mathrm{R_1R_2}}^{-\kappa}\left|\sum_{k=1}^{M_1}\sum_{l=1}^{M_2}\eta_{k,l}\mathrm{e}^{j\left(\phi_{k}+\psi_{\mathrm{r}_1,l}-\zeta_{k,l}\right)}\right|^2}{\bar{\gamma}p_{\mathrm{r}_1}\mathcal{D}d_{\mathrm{R_1R_2}}^{-\kappa}\left|\sum_{k=1}^{M_1}\sum_{l=1}^{M_2}\eta_{k,l}\mathrm{e}^{j\left(\phi_{k}+\psi_{\mathrm{r}_1,l}-\zeta_{k,l}\right)}\right|^2+1},\label{eq-sic}
 \end{align}
 in which $\bar{\gamma}=\frac{P}{\delta^2_\mathrm{r}}$ is the average transmit SNR. Next, after successful SIC, $v_\mathrm{r_1}$ eliminates the message of $v_\mathrm{r_2}$ from its received signal and deciphers its necessary information with the subsequent SNR.
 \begin{align}
 	\gamma_\mathrm{r_1}=\bar{\gamma}p_\mathrm{r_1}\mathcal{D}d_{\mathrm{R_1R_2}}^{-\kappa}\left|\sum_{k=1}^{M_1}\sum_{l=1}^{M_2}\eta_{k,l}\mathrm{e}^{j\left(\phi_{k}+\psi_{\mathrm{r_1},l}-\zeta_{k,l}\right)}\right|^2.\label{eq-r1}
 \end{align}
Meanwhile, the weak receiver vehicle $v_{\mathrm{r}_2}$ decodes its own signal and it has no power to remove the signal of $v_\mathrm{r_1}$ from the combined transmitted message. Hence, by considering the signal of $v_\mathrm{r_1}$ as extra interference, the corresponding SINR at $v_\mathrm{r_2}$ can be expressed as 
  \begin{align}
  	\gamma_\mathrm{r_2}=\frac{\bar{\gamma}p_\mathrm{r_2}\mathcal{D}d_{\mathrm{R_1R_2}}^{-\kappa}\left|\sum_{k=1}^{M_1}\sum_{l=1}^{M_2}\eta_{k,l}\mathrm{e}^{j\left(\phi_{k}+\psi_{\mathrm{r}_2,l}-\zeta_{k,l}\right)}\right|^2}{\bar{\gamma}p_\mathrm{r_1}\mathcal{D}d_{\mathrm{R_1R_2}}^{-\kappa}\left|\sum_{k=1}^{M_1}\sum_{l=1}^{M_2}\eta_{k,l}\mathrm{e}^{j\left(\phi_{k}+\psi_{\mathrm{r_2},l}-\zeta_{k,l}\right)}\right|^2+1}.\label{eq-r2}
  \end{align}
  On the other hand, by considering the multiuser detection ability of the eavesdropper, we assume that it can exploit the parallel interference cancellation approach to intercept the different vehicles' signals. Therefore, the received SNR at $v_\mathrm{e}$ can be formulated as  
   \begin{align}
  	\gamma_\mathrm{e}=\bar{\gamma}_\mathrm{e}p_\mathrm{r_1}\mathcal{D}d_{\mathrm{R_1R_2}}^{-\kappa}\left|\sum_{k=1}^{M_1}\sum_{l=1}^{M_2}\eta_{k,l}\mathrm{e}^{j\left(\phi_{k}+\psi_{\mathrm{r_1},l}-\zeta_{k,l}\right)}\right|^2,\label{}
  \end{align}
in which $\bar{\gamma}_\mathrm{e}=\frac{P}{\delta_\mathrm{e}^2}$ is the average SNR at $v_\mathrm{e}$.
  \subsection{Statistical Characterization}
  In order to evaluate the performance of secrecy metrics such as SOP and ASC, we first need to characterize the distributions of $\gamma_{{\mathrm{r}}_i}$ and $\gamma_{\mathrm{e}}$. 
  
  \subsubsection{Distribution of $\gamma_{{\mathrm{r}}_i}$} In order to optimize the SNR and SINR at the receiver vehicles $v_{\mathrm{r}_i}$, we consider that the CSI of $v_{\mathrm{r}_i}$ is available at the RISs, thereby, the RISs can set an ideal phase shifting, i.e., $\zeta_{k,l}=\phi_k+\psi_{\mathrm{r}_i,l}$, in \eqref{eq-sic}-\eqref{eq-r2}. Hence, regarding the CLT when $M_1, M_2\gg 1$, the random variable $U=\left|\sum_{k=1}^{M_1}\sum_{l=1}^{M_2}\eta_{k,l}\right|$ is approximated as a Gaussian random variable with mean $\mu_U=\sum_{k=1}^{M_1}\sum_{l=1}^{M_2}\mu_{k,l}$ and variance  $\sigma^2_U=\sum_{k=1}^{M_1}\sum_{l=1}^{M_2}\sigma^2_{k,l}$. Given that we consider the legitimate channels following the Fisher-Snedecor $\mathcal{F}$ distribution\footnote{\textcolor{black}{The Fisher-Snedecor $\mathcal{F}$ model is able to accurately capture the main fading distribution models such as Rayleigh (i.e., $m_1 = 1, m_2\rightarrow \infty$) and Nakagami-m (i.e., $m_1 = m, m_2\rightarrow\infty$) distributions. Besides, experimental results showed the superiority of the Fisher-Snedecor $\mathcal{F}$ fading distribution rather than the than the Generalized-$\mathcal{K}$ fading model, especially in D2D communications, where the mobility of users is momentous \cite{yoo2017fisher}.}}, $\mu_{k,l}=\frac{m_2}{m_2-1}$ and $\sigma^2_{k,l}=\frac{m_2^2\left(m_1+m_2-1\right)}{m_1\left(m_2-1\right)^2\left(m_2-2\right)}$ can be defined as the mean and variance of the random variable $\eta_{k,l}$ for all $k=\left\{1,\dots,M_1\right\}$ and $l=\left\{1,\dots,M_2\right\}$, respectively. The terms $m_1$ and $m_2$ are fading parameters that represent the fading severity and the amount of shadowing of the root-mean-square (rms) signal power, respectively. Therefore, the corresponding CDF for the random variable $U$ is defined as
  \begin{align}
  	F_U\left(u\right)=\frac{1}{2}\left[1+\mathrm{erf}\left(\frac{u-\mu_U}{\sigma_u^2\sqrt{2}}\right)\right], \label{eq-cdf-u}
  \end{align}
  where $\mathrm{erf}(s)=\frac{2}{\sqrt{\pi}}\int_0^s\mathrm{e}^{-t^2}dt$ is the error function. Additionally, the mean $\mu_U$ and variance $\sigma^2_\mu$ under Fisher-Snedecor $\mathcal{F}$ can be expressed respectively as
  \begin{align}
	\mu_U=\frac{M_1M_2m_2}{m_2-1}, \label{eq-mean}
\end{align}
\begin{align}
	\sigma^2_U=\frac{M_1M_2m_2^2\left(m_1+m_2-1\right)}{m_1\left(m_2-1\right)^2\left(m_2-2\right)}. \label{eq-var}
\end{align}
   Then, by applying the transformation of random variables into \eqref{eq-cdf-u}, the CDF of $\gamma_{{\mathrm{r}}_1}$ and $\gamma_{{\mathrm{r}}_2}$ are respectively given by 
      \begin{align}
  	F_{\gamma_{\mathrm{r}_1}}\left(\gamma_{\mathrm{r}_1}\right)=\frac{1}{2}\left[1+\mathrm{erf}\left(\frac{\sqrt{\gamma_{\mathrm{r}_1}}-\mu_U\sqrt{\bar{\gamma}A_1}}{\sigma_U^2\sqrt{2\bar{\gamma}A_1}}\right)\right],
  \end{align}
   \begin{align}
  	F_{\mathrm{\gamma_{\mathrm{r}_2}}}\left(\gamma_{\mathrm{r}_2}\right)= \frac{1}{2}\left[1+\mathrm{erf}\left(\frac{\sqrt{\frac{\gamma_{\mathrm{r}_2}}{\bar{\gamma}A_2-\bar{\gamma}A_1\gamma_{\mathrm{r}_2}}}-\mu_U}{\sigma_U^2\sqrt{2}}\right)\right], 
  \end{align}
  in which $A_i=p_{\mathrm{r_i}}\mathcal{D}d_{\mathrm{R_1R_2}}^{-\kappa}$ for $i\in\{1,2\}$.
    \subsubsection{Distribution of $\gamma_{{\mathrm{e}}}$} By optimizing the phase shifts of the RIS elements according to the conditions of the legitimate link, the phase shifts experienced by the eavesdropper's link can end up being uniformly distributed. Moreover, we assume that the CSI of $v_\mathrm{e}$ is not known and the RISs cannot maximize the SNR of the eavesdropper. Then, according to the CLT, $\gamma_{{\mathrm{e}}}$ can be estimated by an exponential random variable. Thus, the PDF and CDF of $\gamma_{{\mathrm{e}}}$ can be respectively defined as \cite{sanchez2020physical}
        \begin{align}
    	f_{\gamma_{\mathrm{e}}}\left(\gamma_{\mathrm{e}}\right)=\lambda_{\mathrm{e}}\mathrm{e}^{-\lambda_\mathrm{e}\gamma_{\mathrm{e}}},
    \end{align}
    \begin{align}
    	F_{\gamma_{\mathrm{e}}}\left(\gamma_{\mathrm{e}}\right)=1-\mathrm{e}^{-\lambda_\mathrm{e}\gamma_{\mathrm{e}}},\label{eq-cdf-e}
    \end{align}
    in which $\lambda_{\mathrm{e}}=\frac{1}{\bar{\gamma}_\mathrm{e}A_1\mu_u }$.
\section{Secrecy Metrics Analysis} \label{sec-per}
\textcolor{black}{In V2V secure communication systems, important secrecy metrics such as ASC, SOP, and SEE need to be analyzed in various practical applications. Generally speaking, such secrecy metrics can be utilized in different scenarios, like traffic management, safety applications, privacy protection, etc., to achieve insights for designing V2V networks.} \textcolor{black}{Specially, in V2V secure communication systems, safety-related messages, such as collision warnings or emergency brake alerts, traffic flow information such as congestion alerts and route recommendations, and personal or sensitive information such as location data and driver behavior patterns, need to be transmitted securely to prevent tampering or spoofing by eavesdroppers. Therefore, in this section}, we first derive analytical expressions of the ASC, SOP, and SEE, by exploiting the obtained CDF and PDF. Then, we derive the asymptotic expressions of the ASC in the high SNR conditions. 
  \subsection{ASC Analysis}
  One of the main goals of PLS is to guarantee secure communication between transmitters and receivers. Therefore, in order to achieve such secure transmission in the considered dual RIS-aided NOMA V2V system, the data rates of the confidential messages $x_{\mathrm{r}_i}$ for receiver vehicles $v_{\mathrm{r}_i}$ need to be set based on the channel conditions and decoding process at the respective receiver and eavesdropper vehicles. To this end, secrecy capacity is defined as the difference between the channel capacities corresponding to legitimate and wiretap links. In particular, the secrecy capacity for the transmission of information from transmitter vehicle $v_{\mathrm{t}}$ to receiver vehicle $v_{\mathrm{r}_i}$ is mathematically given by
  \begin{align}
  	C_{\mathrm{s}}^{\mathrm{r}_i}\left(\gamma_{{\mathrm{r}}_i},\gamma_{\mathrm{e}}\right)=\left[C_{\mathrm{r}_i}-C_\mathrm{e}\right]^+, \quad i\in\left\{1,2\right\},
  \end{align}
  where \textcolor{black}{$\left[x\right]^+=\max\left\{x,0\right\}$ and}  $C_\mathrm{n}=\log_2\left(1+\gamma_{\mathrm{n}}\right)$.
  Then, given that $C_\mathrm{s}^{\mathrm{r}_i}$ is a random function of $\gamma_{{\mathrm{r}}_i}$ and $\gamma_{{\mathrm{e}}}$, we can express the ASC as follow
  \begin{align}
  	\bar{C}_\mathrm{s}^{\mathrm{r}_i}\left[\mathrm{bps/Hz}\right]\triangleq \mathbb{E}\left[C_\mathrm{s}^{\mathrm{r}_i}\left(\gamma_{{\mathrm{r}}_i},\gamma_{{\mathrm{e}}}\right)\right]. \label{eq-asc}
  	\end{align}
  \begin{Proposition}\label{pro-asc1}
  	The ASC for the receiver vehicle $v_\mathrm{r_1}$ for the studied secure dual RIS-aided V2V NOMA communication system is given by
  	\begin{align}
  		\bar{C}_\mathrm{s}^\mathrm{r_1}\approx\frac{2}{\ln 2} \sum_{\tilde{n}=1}^{\tilde{N}}w_{\tilde{n}} \frac{\left(1-\mathrm{e}^{-\lambda_\mathrm{e}\epsilon_{\tilde{n}}}\right)}{\mathrm{e}^{\epsilon_{\tilde{n}}}\left(1+\epsilon_{\tilde{n}}\right)}\mathrm{erfc}\left(\frac{\sqrt{\epsilon_{\tilde{n}}}-\mu_U\sqrt{\bar{\gamma}A_1}}{\sigma_U^2\sqrt{2\bar{\gamma}A_1}}\right),\label{eq-asc-v1}
  	\end{align}
  		in which
  	\begin{align}\label{eq-w} w_{\tilde{n}}=\frac{\epsilon_{\tilde{n}}}{2\left(\tilde{N}+1\right)^2L^2_{\tilde{N}+1}\left(\epsilon_{\tilde{n}}\right)},
  	\end{align}
  	$\epsilon_{\tilde{n}}$ is the $\tilde{n}$-th root of Laguerre polynomial $L_{\tilde{N}}\left(\epsilon_{\tilde{n}}\right)$, $\tilde{N}$ defines the parameter to ensure a complexity-accuracy trade-off, $\mathrm{erfc}(s)=1-\mathrm{erf}(s)$, $A_i=p_{\mathrm{r_i}}\mathcal{D}d_{\mathrm{R_1R_2}}^{-\kappa}$ for $i\in\{1,2\}$, $\lambda_{\mathrm{e}}=\frac{1}{\bar{\gamma}_\mathrm{e}A_1\mu_u }$, and $\mu_U$ and $\sigma^2_U$ are denoted in \eqref{eq-mean}, and \eqref{eq-var}, respectively. 
  \end{Proposition}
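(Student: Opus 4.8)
The plan is to avoid the two-dimensional integral hidden in the definition \eqref{eq-asc} and instead reduce the ASC to a single integral involving only the two marginal CDFs, after which the absence of an elementary antiderivative forces a Gauss--Laguerre approximation. Since $\gamma_{\mathrm{r}_1}$ and $\gamma_{\mathrm{e}}$ are independent and $C_{\mathrm{n}}=\log_2(1+\gamma_{\mathrm{n}})$ is monotone increasing, the event $C_{\mathrm{r}_1}>C_{\mathrm{e}}$ coincides with $\gamma_{\mathrm{r}_1}>\gamma_{\mathrm{e}}$, so the positive part in $[C_{\mathrm{r}_1}-C_{\mathrm{e}}]^+$ is handled simply by restricting the domain of integration.

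First I would write
\[ \bar{C}_{\mathrm{s}}^{\mathrm{r}_1}=\int_0^\infty f_{\gamma_{\mathrm{e}}}(y)\left[\int_y^\infty\bigl(\log_2(1+x)-\log_2(1+y)\bigr)f_{\gamma_{\mathrm{r}_1}}(x)\,dx\right]dy, \]
and integrate the inner integral by parts, differentiating the logarithmic factor and integrating $f_{\gamma_{\mathrm{r}_1}}$ against the complementary CDF $1-F_{\gamma_{\mathrm{r}_1}}$. The boundary term at $x=y$ vanishes because $\log_2(1+x)-\log_2(1+y)$ is zero there, and the term at $x\to\infty$ vanishes provided the tail of $1-F_{\gamma_{\mathrm{r}_1}}$ decays faster than $\log_2(1+x)$ grows, which holds for the Gaussian (CLT) approximation of the legitimate SNR. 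Using $\tfrac{d}{dx}\log_2(1+x)=\tfrac{1}{(1+x)\ln 2}$ this leaves
\[ \bar{C}_{\mathrm{s}}^{\mathrm{r}_1}=\frac{1}{\ln 2}\int_0^\infty f_{\gamma_{\mathrm{e}}}(y)\int_y^\infty\frac{1-F_{\gamma_{\mathrm{r}_1}}(x)}{1+x}\,dx\,dy. \]

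Next I would interchange the order of integration over the region $0\le y\le x$, which collapses the inner $y$-integral of $f_{\gamma_{\mathrm{e}}}$ into $F_{\gamma_{\mathrm{e}}}(x)$ and yields the single-integral master formula
\[ \bar{C}_{\mathrm{s}}^{\mathrm{r}_1}=\frac{1}{\ln 2}\int_0^\infty\frac{\bigl(1-F_{\gamma_{\mathrm{r}_1}}(x)\bigr)F_{\gamma_{\mathrm{e}}}(x)}{1+x}\,dx. \]
Substituting the complementary CDF of $\gamma_{\mathrm{r}_1}$, which by the CLT equals $\tfrac12\,\mathrm{erfc}\!\bigl((\sqrt{x}-\mu_U\sqrt{\bar{\gamma}A_1})/(\sigma_U^2\sqrt{2\bar{\gamma}A_1})\bigr)$, together with $F_{\gamma_{\mathrm{e}}}(x)=1-\mathrm{e}^{-\lambda_{\mathrm{e}}x}$ from \eqref{eq-cdf-e}, produces an integrand that is a product of an error-function term, a rational factor $1/(1+x)$, and the exponential $1-\mathrm{e}^{-\lambda_{\mathrm{e}}x}$.

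The main obstacle is that this integral has no elementary closed form, so the final step is to evaluate it by Gauss--Laguerre quadrature: I would recast it as $\int_0^\infty \mathrm{e}^{-x}\psi(x)\,dx$ by pairing the integrand with the weight $\mathrm{e}^{-x}$ and then approximate it by $\sum_{\tilde n}w_{\tilde n}\psi(\epsilon_{\tilde n})$, where the $\epsilon_{\tilde n}$ are the roots of the Laguerre polynomial $L_{\tilde N}$ and the $w_{\tilde n}$ are the associated weights in \eqref{eq-w}. Collecting the $\mathrm{erfc}$, rational, and exponential factors evaluated at the nodes gives \eqref{eq-asc-v1}, with the overall multiplicative constant and the $\mathrm{e}^{\pm\epsilon_{\tilde n}}$ factor following directly from the chosen normalisation of the Laguerre weights. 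I would finally remark that $\tilde N$ trades accuracy against complexity, and that this same two-step recipe, master formula followed by quadrature, carries over to $v_{\mathrm{r}_2}$ once its CDF is inserted.
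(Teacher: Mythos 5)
Your proposal is correct and takes essentially the same route as the paper: both reduce the ASC to the single-integral master formula $\bar{C}_\mathrm{s}^{\mathrm{r}_1}=\frac{1}{\ln 2}\int_0^\infty \frac{F_{\gamma_\mathrm{e}}(x)\,\bar{F}_{\gamma_{\mathrm{r}_1}}(x)}{1+x}\,dx$ (the paper by expanding the double integral into an ASC-without-eavesdropping term minus an eavesdropping-loss term, you by integration by parts plus Fubini, which is equivalent bookkeeping) and then evaluate it by Gauss--Laguerre quadrature after pairing the integrand with the weight $\mathrm{e}^{-x}$. One remark on your hedged ``$\mathrm{e}^{\pm\epsilon_{\tilde{n}}}$'': the compensation factor at the nodes must be $\mathrm{e}^{+\epsilon_{\tilde{n}}}$, and inserting $\bar{F}_{\gamma_{\mathrm{r}_1}}=\frac{1}{2}\mathrm{erfc}(\cdot)$ yields the prefactor $\frac{1}{2\ln 2}$, so the $\mathrm{e}^{-\epsilon_{\tilde{n}}}$ and $\frac{2}{\ln 2}$ in \eqref{eq-asc-v1} do not follow from any normalisation choice of the Laguerre weights --- your derivation is the sound one and the discrepancy lies in the paper's stated constants, not in your method.
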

  \begin{proof}
  The proof details are provided in Appendix \ref{app-asc1}.
  	\end{proof}
  \begin{Proposition}\label{pro-asc2}
	The ASC for the receiver vehicle $v_\mathrm{r_2}$ for the studied secure dual RIS-aided V2V NOMA communication system is given by
	\begin{align}\notag
		\bar{C}_\mathrm{s}^\mathrm{r_2}\approx\,& \frac{p_\mathrm{r_2}}{\ln 2}\sum_{\tilde{n}=1}^{\tilde{N}}q_{\tilde{n}} \frac{\left(1-\mathrm{e}^{-\frac{p_\mathrm{r_2}\lambda_\mathrm{e}}{p_\mathrm{r_1}}\left(\chi_{\tilde{n}}+1\right)}\right)}{p_\mathrm{r_1}+p_\mathrm{r_2}\left(\chi_{\tilde{n}}+1\right)}\\
		&\times\mathrm{erfc}\left(\frac{\sqrt{\frac{p_\mathrm{r_2}\left(\chi_{\tilde{n}}+1\right)}{\bar{\gamma}p_\mathrm{r_1}A_2-\bar{\gamma}A_1p_\mathrm{r_2}\left(\chi_{\tilde{n}}+1\right)}}-\mu_U}{\sigma_U^2\sqrt{2}}\right),\label{eq-asc-v2}
	\end{align}
	in which 
	\begin{align} q_{\tilde{n}}=\frac{2}{\left(1-\chi_{\tilde{n}}^2\right)\left[\mathcal{P}'_{\tilde{n}}\left(\chi_{\tilde{n}}\right)\right]^2},
		\end{align}
			$\chi_{\tilde{n}}$ is the $\tilde{n}$-th root of  Legendre polynomial $\mathcal{P}_{\tilde{n}}\left(\chi_{\tilde{n}}\right)$ and $\tilde{N}$ defines the parameter to ensure a complexity-accuracy trade-off.
\end{Proposition}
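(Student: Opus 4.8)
The plan is to follow the same single-integral representation of the ASC that underlies Proposition \ref{pro-asc1}, now specialized to the weak vehicle $v_\mathrm{r_2}$. Since $\gamma_{\mathrm{r}_2}$ and $\gamma_\mathrm{e}$ are independent, a standard manipulation (Fubini's theorem together with an integration by parts), with $C_\mathrm{n}=\frac{1}{\ln 2}\ln(1+\gamma_\mathrm{n})$, reduces the two-fold expectation in \eqref{eq-asc} to the one-dimensional form
\begin{align}
\bar{C}_\mathrm{s}^\mathrm{r_2}=\frac{1}{\ln 2}\int_0^\infty\frac{F_{\gamma_\mathrm{e}}(\gamma)\left[1-F_{\gamma_{\mathrm{r}_2}}(\gamma)\right]}{1+\gamma}\,d\gamma,
\end{align}
which I would take as the starting point.

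Next I would substitute the explicit marginals: the eavesdropper CDF $F_{\gamma_\mathrm{e}}(\gamma)=1-\mathrm{e}^{-\lambda_\mathrm{e}\gamma}$ from \eqref{eq-cdf-e}, and $1-F_{\gamma_{\mathrm{r}_2}}(\gamma)=\tfrac{1}{2}\,\mathrm{erfc}(\cdot)$ read off the stated CDF of $\gamma_{\mathrm{r}_2}$. The decisive structural point, and what separates this proof from Proposition \ref{pro-asc1}, is that $\gamma_{\mathrm{r}_2}$ is a \emph{bounded} SINR: letting the channel gain grow in \eqref{eq-r2} shows $\gamma_{\mathrm{r}_2}\uparrow p_\mathrm{r_2}/p_\mathrm{r_1}$, so $1-F_{\gamma_{\mathrm{r}_2}}(\gamma)$ vanishes for $\gamma\geq p_\mathrm{r_2}/p_\mathrm{r_1}$ and the integrand is supported on the finite interval $[0,\,p_\mathrm{r_2}/p_\mathrm{r_1}]$. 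Equivalently, this is precisely the range on which the denominator $\bar{\gamma}A_2-\bar{\gamma}A_1\gamma$ inside the $\mathrm{erfc}$ argument stays positive, so that argument remains real.

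Because the effective domain is finite, Gauss--Laguerre is no longer the right tool; I would instead apply Gauss--Legendre (Gaussian) quadrature, whose nodes $\chi_{\tilde{n}}$ are the roots of $\mathcal{P}_{\tilde{N}}$ and whose weights are $q_{\tilde{n}}$. The final step is an affine change of variable sending $[0,\,p_\mathrm{r_2}/p_\mathrm{r_1}]$ onto $[-1,1]$, expressing $\gamma$ as a linear function of $\chi$ that produces the recurring factor $(\chi_{\tilde{n}}+1)$. Combining $1/(1+\gamma)$ with the Jacobian of this map yields the $p_\mathrm{r_2}$ prefactor and the denominator $p_\mathrm{r_1}+p_\mathrm{r_2}(\chi_{\tilde{n}}+1)$; evaluating $F_{\gamma_\mathrm{e}}$ at the mapped node gives $1-\mathrm{e}^{-(p_\mathrm{r_2}\lambda_\mathrm{e}/p_\mathrm{r_1})(\chi_{\tilde{n}}+1)}$; and the residual $\tfrac{1}{2}\,\mathrm{erfc}(\cdot)$ supplies the error-function term of \eqref{eq-asc-v2}. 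Replacing the integral by $\sum_{\tilde{n}}q_{\tilde{n}}(\cdot)$ then collects everything into the stated expression.

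I expect the main obstacle to be the bookkeeping around the bounded support: one must confirm that the saturation level $p_\mathrm{r_2}/p_\mathrm{r_1}$ is exactly the upper integration limit, so the $\mathrm{erfc}$ argument never becomes complex, and then track the affine-map constants so that the powers $p_\mathrm{r_1}$ and $p_\mathrm{r_2}$ land in the right places (the quadrature-weight normalization also has to be reconciled with the $\tfrac{1}{2}$ coming from $1-F_{\gamma_{\mathrm{r}_2}}$). The CLT-Gaussian approximation of $U$ and the independence of the legitimate and wiretap links are taken as given from the Statistical Characterization section, and the quadrature contributes only the usual controllable truncation error governed by $\tilde{N}$.
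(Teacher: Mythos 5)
Your proposal is correct and follows essentially the paper's own proof: the identical reduction to $\frac{1}{\ln 2}\int_0^{p_\mathrm{r_2}/p_\mathrm{r_1}}\frac{F_{\gamma_\mathrm{e}}\left(\gamma\right)\left[1-F_{\gamma_{\mathrm{r}_2}}\left(\gamma\right)\right]}{1+\gamma}\,d\gamma$ (the paper states your saturation bound $\gamma_{\mathrm{r}_2}<p_\mathrm{r_2}/p_\mathrm{r_1}$ as the ASC vanishing when $p_\mathrm{r_2}-\gamma_{\mathrm{r}_2}p_\mathrm{r_1}\leq 0$), followed by the same Gauss--Legendre quadrature after the affine map of $\left[0,\,p_\mathrm{r_2}/p_\mathrm{r_1}\right]$ onto $\left[-1,1\right]$. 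The factor-of-two bookkeeping you flag (the Jacobian $\tfrac{b-a}{2}$, the node $\tfrac{b-a}{2}\chi_{\tilde{n}}+\tfrac{b+a}{2}$, and the $\tfrac{1}{2}$ from $1-F_{\gamma_{\mathrm{r}_2}}=\tfrac{1}{2}\,\mathrm{erfc}\left(\cdot\right)$) is precisely what the paper compresses into ``doing some simplifications,'' so your outline is as complete as the published argument.
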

  \begin{proof}
	The proof details are provided in \ref{app-asc2}.
\end{proof}
In order to obtain a deeper understanding of the  performance of the ASC, we evaluate the asymptotic behavior of the ASC at the high SNR conditions, i.e., $\bar{\gamma}\rightarrow\infty$, in the following corollary. 
\begin{corollary}\label{cor-asc1}
The asymptotic ASC for the receiver vehicles $v_\mathrm{r_1}$ and $v_\mathrm{r_2}$ for the studied secure dual RIS-aided V2V NOMA communication system is respectively given by 
\begin{align}\label{eq-asy-c1}
\bar{C}_{\mathrm{s}, \infty}^{\mathrm{r}_1}\approx\log_2\left(\bar{\gamma}A_1\mu_{U^2}\right)+\mathrm{e}^{\lambda_{\mathrm{e}}}\mathrm{E}_1\left(\lambda_{\mathrm{e}}\right),
\end{align}
	\begin{align}\label{eq-asy-c2}
	\bar{C}_{\mathrm{s}, \infty}^{\mathrm{r}_2}\approx\log_2\left(1+\frac{p_\mathrm{r_2}}{p_\mathrm{r_1}}\right)+\mathrm{e}^{\lambda_{\mathrm{e}}}\mathrm{E}_1\left(\lambda_{\mathrm{e}}\right),
\end{align}
where $\mathrm{E}\left(s\right)=\int_s^\infty \frac{\mathrm{e}^{-t}}{t}dt$ is the exponential integral and 
\begin{align}
\mu_{U^2}=\left[\frac{M_1M_2m_2^2\left(m_1+m_2-1\right)}{m_1\left(m_2-1\right)^2\left(m_2-2\right)}+\left(\frac{M_1M_2m_2}{m_2-1}\right)^2\right].\label{eq-mu2}
\end{align}
\end{corollary}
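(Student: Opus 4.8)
The plan is to bypass the quadrature forms of Propositions \ref{pro-asc1}--\ref{pro-asc2} and work directly from the definition $\bar{C}_\mathrm{s}^{\mathrm{r}_i}=\mathbb{E}\left[\left(C_{\mathrm{r}_i}-C_\mathrm{e}\right)^+\right]$. In the regime $\bar{\gamma}\rightarrow\infty$ the legitimate SNR/SINR dominates the (fixed-$\bar{\gamma}_\mathrm{e}$) eavesdropper SNR with probability approaching one, so the operator $\left[\cdot\right]^+$ can be dropped to leading order and the ASC splits additively as $\bar{C}_\mathrm{s}^{\mathrm{r}_i}\approx\mathbb{E}\left[\log_2\left(1+\gamma_{\mathrm{r}_i}\right)\right]-\mathbb{E}\left[\log_2\left(1+\gamma_\mathrm{e}\right)\right]$. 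I would then evaluate the two expectations separately, noting that the eavesdropper term is common to both receivers.

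For the eavesdropper term I would insert the exponential PDF $f_{\gamma_\mathrm{e}}$ and compute $\mathbb{E}\left[\ln\left(1+\gamma_\mathrm{e}\right)\right]=\int_0^\infty\ln\left(1+x\right)\lambda_\mathrm{e}\mathrm{e}^{-\lambda_\mathrm{e}x}\,dx$ by integration by parts, which reduces it to $\int_0^\infty\frac{\mathrm{e}^{-\lambda_\mathrm{e}x}}{1+x}\,dx$; the substitution $t=1+x$ turns this into $\mathrm{e}^{\lambda_\mathrm{e}}\mathrm{E}_1\left(\lambda_\mathrm{e}\right)$ in nats, with $\mathrm{E}_1$ the exponential integral defined in the statement. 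Up to the $\ln 2$ normalization this is the shared closed-form contribution $\mathrm{e}^{\lambda_\mathrm{e}}\mathrm{E}_1\left(\lambda_\mathrm{e}\right)$ appearing in both \eqref{eq-asy-c1} and \eqref{eq-asy-c2}, and since $\lambda_\mathrm{e}$ does not depend on $\bar{\gamma}$ it survives unchanged in the limit.

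For $v_\mathrm{r_1}$, since $\gamma_{\mathrm{r}_1}=\bar{\gamma}A_1U^2\rightarrow\infty$, I would use $\log_2\left(1+\gamma_{\mathrm{r}_1}\right)\approx\log_2\left(\bar{\gamma}A_1\right)+\log_2\left(U^2\right)$ and take expectations. The key step is the high-concentration approximation $\mathbb{E}\left[\log_2\left(U^2\right)\right]\approx\log_2\left(\mathbb{E}\left[U^2\right]\right)=\log_2\left(\mu_{U^2}\right)$, where $\mu_{U^2}=\sigma_U^2+\mu_U^2$ reproduces \eqref{eq-mu2} after inserting \eqref{eq-mean}--\eqref{eq-var}; this is admissible because the CLT makes $U$ a Gaussian whose coefficient of variation vanishes for $M_1,M_2\gg1$, so $U^2$ clusters tightly about its mean. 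Combining with the eavesdropper term gives \eqref{eq-asy-c1}. For $v_\mathrm{r_2}$ the argument is cleaner: the SINR in \eqref{eq-r2} is interference-limited, so as $\bar{\gamma}\rightarrow\infty$ the additive unit noise term becomes negligible, the random factor $U^2$ cancels between numerator and denominator, and $\gamma_{\mathrm{r}_2}\rightarrow p_\mathrm{r_2}/p_\mathrm{r_1}$ deterministically; dominated convergence then yields $\mathbb{E}\left[\log_2\left(1+\gamma_{\mathrm{r}_2}\right)\right]\rightarrow\log_2\left(1+p_\mathrm{r_2}/p_\mathrm{r_1}\right)$, the saturating form in \eqref{eq-asy-c2}.

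The main obstacle is rigor in the limiting interchanges rather than any difficult integral. Specifically, justifying that $\mathbb{E}\left[\log_2\left(U^2\right)\right]$ may be replaced by $\log_2\left(\mathbb{E}\left[U^2\right]\right)$ is a delta-method/concentration claim that is only asymptotically exact, and it silently neglects the Gaussian probability mass where the approximation of $U$ is non-positive; likewise, dropping $\left[\cdot\right]^+$ and passing $\bar{\gamma}\rightarrow\infty$ through the expectation must be shown to affect only lower-order terms. Both controls follow from the CLT concentration of $U$ and the fact that $\bar{\gamma}_\mathrm{e}$, and hence $\lambda_\mathrm{e}$, is held fixed, so the neglected corrections vanish in the stated high-SNR asymptotic.
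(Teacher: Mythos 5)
Your proposal is correct and follows essentially the same route as the paper's Appendix~\ref{app-asc-asy1}: drop the $\left[\cdot\right]^+$ at high SNR, approximate $\mathbb{E}\left[\log_2\left(\bar{\gamma}A_1U^2\right)\right]\approx\log_2\left(\bar{\gamma}A_1\mu_{U^2}\right)$ with $\mu_{U^2}=\sigma_U^2+\mu_U^2$, evaluate the eavesdropper expectation via the CCDF/integration-by-parts identity $\int_0^\infty\frac{\mathrm{e}^{-\lambda_\mathrm{e}x}}{1+x}\,dx=\mathrm{e}^{\lambda_\mathrm{e}}\mathrm{E}_1\left(\lambda_\mathrm{e}\right)$, and use the interference-limited saturation $\gamma_{\mathrm{r}_2}\rightarrow p_\mathrm{r_2}/p_\mathrm{r_1}$ for the weak vehicle, with your concentration and dominated-convergence caveats merely making explicit what the paper leaves implicit. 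One caution: both your derivation and the paper's own proof actually produce the eavesdropper contribution as $-\frac{1}{\ln 2}\,\mathrm{e}^{\lambda_\mathrm{e}}\mathrm{E}_1\left(\lambda_\mathrm{e}\right)$, so the $+\,\mathrm{e}^{\lambda_\mathrm{e}}\mathrm{E}_1\left(\lambda_\mathrm{e}\right)$ in the corollary display carries a sign and $1/\ln 2$ discrepancy (evidently a typo in the statement) that your phrase ``up to the $\ln 2$ normalization'' glosses over without resolving.
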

\begin{proof}
The proof details are provided in Appendix \ref{app-asc-asy1}.
\end{proof}
\textcolor{black}{\begin{remark}
It is worth mentioning that the asymptotic expressions in \eqref{eq-asy-c1} and \eqref{eq-asy-c2} reveal intuitive insights into the ASC performance in terms of average SNR $\bar{\gamma}$, thanks to their simple structures. Given the logarithm function $\log_2\left(\bar{\gamma}\right)$ properties, it is mathematically understandable in \eqref{eq-asy-c1} that as $\bar{\gamma}$ grows for other fixed channel parameters, the asymptotic ASC of $v_\mathrm{r_1}$ increases. Besides, it can be seen in \eqref{eq-asy-c2} that the ASC for weak receiver vehicle $v_\mathrm{r_2}$ converges to a constant value in the high SNR regime $\bar{\gamma}\rightarrow\infty$. These behaviors are in alignment with the use of SIC in the strong NOMA receiver vehicle $v_\mathrm{r_1}$, which provides a larger SINR value.
for $v_\mathrm{r_1}$ compared with the weak vehicle $v_\mathrm{r_2}$. In Section \ref{sec-num}, such insights will be shown through numerical and simulation analysis to validate the accuracy of theoretical expressions.
\end{remark}}
\begin{Proposition}
The ASC for the legitimate receiver vehicle pair for the studied secure V2V NOMA communications systems is given by
\begin{align}
\bar{C}_\mathrm{s}= \bar{C}_\mathrm{s}^{\mathrm{r}_1}+\bar{C}_\mathrm{s}^{\mathrm{r}_2}.
\end{align}
\end{Proposition}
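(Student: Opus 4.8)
The plan is to exploit the additive structure of the NOMA secrecy rate together with the linearity of the expectation operator. First I would observe that, under the superposition-coding NOMA scheme adopted in the received signal model of Section~\ref{sec-sys}, the transmitter vehicle $v_\mathrm{t}$ conveys two \emph{independent} confidential symbols $x_\mathrm{r_1}$ and $x_\mathrm{r_2}$ to the legitimate receiver pair within the same time--frequency resource. Consequently, the instantaneous secrecy rate afforded to the pair is the aggregate of the two per-user secrecy capacities, i.e.,
\begin{align}
C_\mathrm{s}\left(\gamma_{\mathrm{r}_1},\gamma_{\mathrm{r}_2},\gamma_\mathrm{e}\right)=C_\mathrm{s}^{\mathrm{r}_1}\left(\gamma_{\mathrm{r}_1},\gamma_\mathrm{e}\right)+C_\mathrm{s}^{\mathrm{r}_2}\left(\gamma_{\mathrm{r}_2},\gamma_\mathrm{e}\right),
\end{align}
where each term $C_\mathrm{s}^{\mathrm{r}_i}=\left[C_{\mathrm{r}_i}-C_\mathrm{e}\right]^+$ is precisely the instantaneous secrecy capacity already characterized for $v_{\mathrm{r}_i}$.

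Next I would take the statistical average of both sides. Since the ASC of the pair is defined as $\bar{C}_\mathrm{s}\triangleq\mathbb{E}\left[C_\mathrm{s}\right]$ and the expectation is a linear functional, it distributes over the sum regardless of any statistical dependence among $\gamma_{\mathrm{r}_1}$, $\gamma_{\mathrm{r}_2}$, and $\gamma_\mathrm{e}$. Hence
\begin{align}
\bar{C}_\mathrm{s}=\mathbb{E}\left[C_\mathrm{s}^{\mathrm{r}_1}+C_\mathrm{s}^{\mathrm{r}_2}\right]=\mathbb{E}\left[C_\mathrm{s}^{\mathrm{r}_1}\right]+\mathbb{E}\left[C_\mathrm{s}^{\mathrm{r}_2}\right]=\bar{C}_\mathrm{s}^{\mathrm{r}_1}+\bar{C}_\mathrm{s}^{\mathrm{r}_2},
\end{align}
where the last equality invokes the per-user ASC definition in \eqref{eq-asc}. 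Substituting the closed-form approximations from Propositions~\ref{pro-asc1} and \ref{pro-asc2} would then yield the final evaluable expression.

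The genuinely delicate point is not the algebra but the justification of the first display: one must argue that the sum secrecy rate of the two NOMA streams is indeed additive. This rests on the facts that the confidential messages are mutually independent and that each stream's achievable secrecy rate is measured against its own legitimate-receiver SNR/SINR---namely $\gamma_{\mathrm{r}_1}$ obtained after successful SIC and $\gamma_{\mathrm{r}_2}$ obtained while treating $v_\mathrm{r_1}$'s signal as interference---and against the common wiretap rate $C_\mathrm{e}$. Given these modelling assumptions, which were fixed in Section~\ref{sec-sys}, the additivity is immediate and the remainder follows purely from linearity of expectation, so no substantial obstacle arises beyond correctly invoking the NOMA rate-splitting structure.
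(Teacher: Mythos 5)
Your proposal is correct and coincides with the paper's own (implicit) treatment: the paper states this proposition without proof, since the pair's ASC is taken by definition as the expectation of the aggregate of the two per-user secrecy rates, and linearity of expectation immediately yields $\bar{C}_\mathrm{s}=\bar{C}_\mathrm{s}^{\mathrm{r}_1}+\bar{C}_\mathrm{s}^{\mathrm{r}_2}$. Your added remark on why the NOMA sum secrecy rate is additive (independent superposed messages, each measured against its own SNR/SINR and the common eavesdropper rate) simply makes explicit the modelling assumption the paper relies on.
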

\subsection{SOP Analysis}
The SOP is an important metric to analyze the performance of secure communications, which is defined as the probability that the random secrecy capacity $C_{\mathrm{s}}^{\mathrm{r}_i}$ drops beneath a specified secrecy rate threshold $R_\mathrm{s}$. Therefore, for the considered RIS-aided V2V NOMA system, the SOP of receiver vehicles $v_{\mathrm{r}_i}$ can be defined as
\begin{align}
	P_\mathrm{sop}^{\mathrm{r}_i}=\Pr \left(C_{\mathrm{s}}^{\mathrm{r}_i}\leq R_\mathrm{s}\right).\label{eq-sop}
\end{align} 
\begin{Proposition}\label{pro-sop1}
	The SOP for the receiver vehicle $v_\mathrm{r_1}$ for the studied secure dual RIS-aided V2V NOMA communication system is given by
	\begin{align}
		P_\mathrm{sop}^\mathrm{r_1}\approx\sum_{\tilde{n}=1}^{\tilde{N}}\frac{w_{\tilde{n}}}{2} \left[1+\mathrm{erf}\left(\frac{\sqrt{\frac{R_\mathrm{th}}{\lambda_{\mathrm{e}}}\epsilon_{\tilde{n}}+\bar{R}_\mathrm{th}}-\sqrt{\bar{\gamma}A_1}\mu_U}{\sigma_U^2\sqrt{2\bar{\gamma}A_1}}\right)\right],\label{eq-sop-v1}
	\end{align}
	in which $R_\mathrm{th}=2^{R_\mathrm{s}}$, $\bar{R}_\mathrm{th}=R_\mathrm{th}-1$, and $w_{\tilde{n}}$ is defined in \eqref{eq-w}.
\end{Proposition}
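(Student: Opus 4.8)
The plan is to begin with the defining event in \eqref{eq-sop} and first strip off the $\left[\cdot\right]^+$ operator. Since $R_\mathrm{s}\ge 0$, the event $\left[C_{\mathrm{r}_1}-C_\mathrm{e}\right]^+\le R_\mathrm{s}$ is equivalent to $C_{\mathrm{r}_1}-C_\mathrm{e}\le R_\mathrm{s}$: whenever $C_{\mathrm{r}_1}-C_\mathrm{e}\le 0$ the clipped quantity equals $0\le R_\mathrm{s}$, and otherwise the clipping is inactive so the two events coincide. Substituting $C_\mathrm{n}=\log_2\left(1+\gamma_\mathrm{n}\right)$ and exponentiating base $2$ then converts the log-domain inequality into a linear one in the instantaneous SNRs, namely $\gamma_{\mathrm{r}_1}\le R_\mathrm{th}\left(1+\gamma_\mathrm{e}\right)-1=R_\mathrm{th}\gamma_\mathrm{e}+\bar{R}_\mathrm{th}$, with $R_\mathrm{th}=2^{R_\mathrm{s}}$ and $\bar{R}_\mathrm{th}=R_\mathrm{th}-1$.

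Next I would condition on $\gamma_\mathrm{e}$ and integrate against its density, treating the legitimate and wiretap SNRs as statistically independent in accordance with the model of Section \ref{sec-sys}. This gives
\begin{align}\notag
P_\mathrm{sop}^\mathrm{r_1}=\int_0^\infty F_{\gamma_{\mathrm{r}_1}}\!\left(R_\mathrm{th}\gamma_\mathrm{e}+\bar{R}_\mathrm{th}\right)f_{\gamma_\mathrm{e}}\!\left(\gamma_\mathrm{e}\right)d\gamma_\mathrm{e},
\end{align}
into which I would insert the Gaussian-approximated CDF $F_{\gamma_{\mathrm{r}_1}}$ and the exponential density $f_{\gamma_\mathrm{e}}\left(\gamma_\mathrm{e}\right)=\lambda_\mathrm{e}\mathrm{e}^{-\lambda_\mathrm{e}\gamma_\mathrm{e}}$ from \eqref{eq-cdf-e} and the surrounding text.

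The resulting integrand carries an error function whose argument depends on $\sqrt{R_\mathrm{th}\gamma_\mathrm{e}+\bar{R}_\mathrm{th}}$, so no elementary closed form is available; this is the main obstacle, and I would resolve it via Gauss--Laguerre quadrature. The key preparatory step is the substitution $x=\lambda_\mathrm{e}\gamma_\mathrm{e}$, which absorbs the factor $\lambda_\mathrm{e}$ and recasts the integral into the canonical form $\int_0^\infty\mathrm{e}^{-x}g(x)\,dx$ with $g(x)=\frac{1}{2}\bigl[1+\mathrm{erf}(\cdot)\bigr]$, the error-function argument now carrying $\frac{R_\mathrm{th}}{\lambda_\mathrm{e}}x+\bar{R}_\mathrm{th}$ under the root. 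Applying the $\tilde{N}$-point rule $\int_0^\infty\mathrm{e}^{-x}g(x)\,dx\approx\sum_{\tilde{n}=1}^{\tilde{N}}w_{\tilde{n}}g\left(\epsilon_{\tilde{n}}\right)$, with $\epsilon_{\tilde{n}}$ the roots of the Laguerre polynomial $L_{\tilde{N}}$ and $w_{\tilde{n}}$ the weights in \eqref{eq-w}, then yields \eqref{eq-sop-v1}. The only care needed is to track the $1/2$ and $\lambda_\mathrm{e}$ bookkeeping through the substitution so that the per-node prefactor collapses cleanly to $w_{\tilde{n}}/2$.
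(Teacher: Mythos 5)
Your proposal is correct and follows essentially the same route as the paper's own proof: rewrite the outage event as $\gamma_{\mathrm{r}_1}\leq R_\mathrm{th}\gamma_{\mathrm{e}}+\bar{R}_\mathrm{th}$, express the SOP as $\int_0^\infty F_{\gamma_{\mathrm{r}_1}}\left(R_\mathrm{th}\gamma_{\mathrm{e}}+\bar{R}_\mathrm{th}\right)f_{\gamma_{\mathrm{e}}}\left(\gamma_{\mathrm{e}}\right)d\gamma_{\mathrm{e}}$, insert the Gaussian-approximated CDF and exponential PDF, and apply Gauss--Laguerre quadrature. In fact you make explicit two details the paper leaves implicit --- the equivalence of the clipped event $\left[C_{\mathrm{r}_1}-C_\mathrm{e}\right]^+\leq R_\mathrm{s}$ with the unclipped one, and the substitution $x=\lambda_\mathrm{e}\gamma_\mathrm{e}$ that absorbs the prefactor $\lambda_\mathrm{e}$ and produces the $\frac{R_\mathrm{th}}{\lambda_\mathrm{e}}\epsilon_{\tilde{n}}$ term in \eqref{eq-sop-v1} --- so no gaps remain.
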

\begin{proof}
	The proof details are provided in Appendix \ref{app-sop1}.
\end{proof}

\begin{Proposition}\label{pro-sop2}
	The SOP for the receiver vehicle $v_\mathrm{r_2}$ for the studied secure dual RIS-aided V2V NOMA communication system is given by
	\begin{align}
		P_\mathrm{sop}^\mathrm{r_2}\approx\sum_{\tilde{n}=1}^{\tilde{N}}\hspace{-1mm}\frac{w_{\tilde{n}}}{2}\hspace{-1mm}\left[1+\mathrm{erf}\left(\frac{\sqrt{\frac{\frac{R_\mathrm{th}}{\lambda_{\mathrm{e}}}\epsilon_{\tilde{n}}+\bar{R}_\mathrm{th}}{\bar{\gamma}A_2-\bar{\gamma}A_1\left(\frac{R_\mathrm{th}}{\lambda_{\mathrm{e}}}\epsilon_{\tilde{n}}+\bar{R}_\mathrm{th}\right)}}-\mu_U}{\sigma_U^2\sqrt{2}}\right)\right], \label{eq-sop-v2}
	\end{align}
	in which $R_\mathrm{th}=2^{R_\mathrm{s}}$, $\bar{R}_\mathrm{th}=R_\mathrm{th}-1$, and $w_{\tilde{n}}$ is defined in \eqref{eq-w}.
\end{Proposition}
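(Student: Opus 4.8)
The plan is to mirror the derivation of Proposition~\ref{pro-sop1} and reduce the SOP definition in \eqref{eq-sop} to a probability over the two SNR/SINR variables that can be evaluated by conditioning. First I would note that, for a nonnegative threshold $R_\mathrm{s}$, the clipping operator $\left[\cdot\right]^+$ is inactive, so $P_\mathrm{sop}^\mathrm{r_2}=\Pr\left(C_\mathrm{r_2}-C_\mathrm{e}\leq R_\mathrm{s}\right)$. Substituting $C_\mathrm{n}=\log_2\left(1+\gamma_\mathrm{n}\right)$ and exponentiating with base $2$ turns this into $\Pr\left(1+\gamma_\mathrm{r_2}\leq R_\mathrm{th}\left(1+\gamma_\mathrm{e}\right)\right)$ with $R_\mathrm{th}=2^{R_\mathrm{s}}$. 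Rearranging and writing $\bar{R}_\mathrm{th}=R_\mathrm{th}-1$ yields the clean form
\begin{align}
P_\mathrm{sop}^\mathrm{r_2}=\Pr\left(\gamma_\mathrm{r_2}\leq R_\mathrm{th}\gamma_\mathrm{e}+\bar{R}_\mathrm{th}\right).\notag
\end{align}

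Next I would condition on the (independent) eavesdropper SNR $\gamma_\mathrm{e}$ and integrate against its density, giving
\begin{align}
P_\mathrm{sop}^\mathrm{r_2}=\int_0^\infty F_{\gamma_\mathrm{r_2}}\left(R_\mathrm{th}\gamma_\mathrm{e}+\bar{R}_\mathrm{th}\right)f_{\gamma_\mathrm{e}}\left(\gamma_\mathrm{e}\right)d\gamma_\mathrm{e}.\notag
\end{align}
Here I would insert the CDF of $\gamma_\mathrm{r_2}$ stated before Proposition~\ref{pro-asc1} together with the exponential density $f_{\gamma_\mathrm{e}}\left(\gamma_\mathrm{e}\right)=\lambda_\mathrm{e}\mathrm{e}^{-\lambda_\mathrm{e}\gamma_\mathrm{e}}$. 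Evaluating the CDF at $\gamma_\mathrm{r_2}=R_\mathrm{th}\gamma_\mathrm{e}+\bar{R}_\mathrm{th}$ produces the error-function argument $\sqrt{\left(R_\mathrm{th}\gamma_\mathrm{e}+\bar{R}_\mathrm{th}\right)/\left(\bar{\gamma}A_2-\bar{\gamma}A_1\left(R_\mathrm{th}\gamma_\mathrm{e}+\bar{R}_\mathrm{th}\right)\right)}$, which already matches the structure of the claimed \eqref{eq-sop-v2}.

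Finally, to cast the integral into Gauss-Laguerre form I would apply the change of variable $t=\lambda_\mathrm{e}\gamma_\mathrm{e}$, which sends $f_{\gamma_\mathrm{e}}\left(\gamma_\mathrm{e}\right)d\gamma_\mathrm{e}\mapsto\mathrm{e}^{-t}dt$ and replaces $R_\mathrm{th}\gamma_\mathrm{e}$ by $\tfrac{R_\mathrm{th}}{\lambda_\mathrm{e}}t$. The remaining integral $\int_0^\infty\mathrm{e}^{-t}g\left(t\right)dt$ is approximated by the $\tilde{N}$-point rule $\sum_{\tilde{n}=1}^{\tilde{N}}w_{\tilde{n}}g\left(\epsilon_{\tilde{n}}\right)$ with nodes $\epsilon_{\tilde{n}}$ and weights $w_{\tilde{n}}$ as in \eqref{eq-w}, where the factor $\tfrac{1}{2}$ from the CDF yields the $\tfrac{w_{\tilde{n}}}{2}$ prefactor, giving \eqref{eq-sop-v2}. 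The main obstacle I anticipate is not the quadrature itself but the bounded support of $\gamma_\mathrm{r_2}$: since $\gamma_\mathrm{r_2}<p_\mathrm{r_2}/p_\mathrm{r_1}=A_2/A_1$, the denominator $\bar{\gamma}A_2-\bar{\gamma}A_1\left(R_\mathrm{th}\gamma_\mathrm{e}+\bar{R}_\mathrm{th}\right)$ turns nonpositive once $\gamma_\mathrm{e}$ is large, a region where the CDF should be clamped to one rather than read off the error-function formula. I would therefore need to argue that, over the parameter ranges of interest, the dominant mass of the integral lies where this argument is well defined, so that applying the quadrature directly to the closed-form CDF remains an accurate approximation.
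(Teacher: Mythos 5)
Your proposal follows the paper's proof essentially step for step: reduce the SOP to $\Pr\left(\gamma_{\mathrm{r}_2}\leq R_\mathrm{th}\gamma_{\mathrm{e}}+\bar{R}_\mathrm{th}\right)$, condition on the exponentially distributed $\gamma_{\mathrm{e}}$, insert the erf-form CDF of $\gamma_{\mathrm{r}_2}$, and apply the Gauss--Laguerre rule after the rescaling $t=\lambda_\mathrm{e}\gamma_\mathrm{e}$ (which the paper performs only implicitly when its nodes appear as $\tfrac{R_\mathrm{th}}{\lambda_\mathrm{e}}\epsilon_{\tilde{n}}$). Your closing caveat about the bounded support of $\gamma_{\mathrm{r}_2}$ --- that the erf expression should be clamped to one once $R_\mathrm{th}\gamma_\mathrm{e}+\bar{R}_\mathrm{th}\geq A_2/A_1$, lest the square root see a negative argument --- flags a point the paper's own proof silently ignores, so your treatment is, if anything, more careful than the published one.
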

\begin{proof}
	The proof details are provided in Appendix \ref{app-sop2}.
\end{proof}
\subsection{SEE Analysis}
SEE is a momentous metric to assess the effectiveness of wireless secure communication systems under energy constraints, which is defined as the ratio of the ASC, representing the reliable and secure transmission of confidential messages, to the total power consumption. In other words, SEE provides a measure of how a communication system with security considerations efficiently uses energy to guarantee a secure transmission. Therefore, for the considered dual RIS-aided system, SEE can be mathematically defined as
\begin{align}
	\mathcal{E}_\mathrm{s}=\frac{\bar{C}_\mathrm{s}}{P_\mathrm{tot}}=\frac{\bar{C}_\mathrm{s}}{P/\alpha+M_1P_\mathrm{R_1}+M_2P_\mathrm{R_2}+P_\mathrm{t_1}+P_\mathrm{t_2}},\label{eq-ee-z}
\end{align}
in which $\alpha$ is the drain efficiency of high-power amplifier (HPA), $P_{v_{\mathrm{r}_i}}$ denotes the circuit power consumption at receiver vehicle $v_{\mathrm{r}_i}$, and $P_\mathrm{R_i}$ is the power consumed by each element of the RIS $i$.
\begin{Proposition}
The SEE for the legitimate receiver vehicle pair for the studied secure V2V NOMA communications systems is given by \eqref{eq-see}.
\begin{figure*}[t]
	\normalsize
	\setcounter{equation}{27}
\begin{align}
\mathcal{E}_\mathrm{s}= \frac{\sum_{\tilde{n}=1}^{\tilde{N}} \frac{q_{\tilde{n}}p_\mathrm{r_2}\left(1-\mathrm{e}^{-\frac{p_\mathrm{r_2}\lambda_\mathrm{e}}{p_\mathrm{r_1}}\left(\chi_{\tilde{n}}+1\right)}\right)}{p_\mathrm{r_1}+p_\mathrm{r_2}\left(\chi_{\tilde{n}}+1\right)}\mathrm{erfc}\left(\frac{\sqrt{\frac{p_\mathrm{r_2}\left(\chi_{\tilde{n}}+1\right)}{\bar{\gamma}p_\mathrm{r_1}A_2-\bar{\gamma}A_1p_\mathrm{r_2}\left(\chi_{\tilde{n}}+1\right)}}-\mu_U}{\sigma_U^2\sqrt{2}}\right)+ \frac{2w_{\tilde{n}}\left(1-\mathrm{e}^{-\lambda_\mathrm{e}\epsilon_{\tilde{n}}}\right)}{\mathrm{e}^{\epsilon_{\tilde{n}}}\left(1+\epsilon_{\tilde{n}}\right)}\mathrm{erfc}\left(\frac{\sqrt{\epsilon_{\tilde{n}}}-\mu_U\sqrt{\bar{\gamma}A_1}}{\sigma_U^2\sqrt{2\bar{\gamma}A_1}}\right)}{\ln 2\left(P/\alpha+M_1P_\mathrm{R_1}+M_2P_\mathrm{R_2}+P_\mathrm{t_2}+P_\mathrm{t_1}\right)}. \label{eq-see}
\end{align}
	\hrulefill
	\vspace{-15pt}
\end{figure*}
\end{Proposition}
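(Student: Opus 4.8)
The plan is to treat this as a direct assembly step rather than a fresh derivation, since all of the analytical work has already been carried out in Propositions~\ref{pro-asc1} and~\ref{pro-asc2}. First I would start from the definition of the secrecy energy efficiency in \eqref{eq-ee-z}, namely $\mathcal{E}_\mathrm{s}=\bar{C}_\mathrm{s}/P_\mathrm{tot}$ with $P_\mathrm{tot}=P/\alpha+M_1P_\mathrm{R_1}+M_2P_\mathrm{R_2}+P_\mathrm{t_1}+P_\mathrm{t_2}$, and then invoke the immediately preceding Proposition stating that the ASC of the receiver pair decomposes additively as $\bar{C}_\mathrm{s}=\bar{C}_\mathrm{s}^{\mathrm{r}_1}+\bar{C}_\mathrm{s}^{\mathrm{r}_2}$.

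Next I would substitute the two closed-form approximations already obtained: the Gauss--Laguerre expression for $\bar{C}_\mathrm{s}^{\mathrm{r}_1}$ from \eqref{eq-asc-v1} and the Gauss--Legendre expression for $\bar{C}_\mathrm{s}^{\mathrm{r}_2}$ from \eqref{eq-asc-v2}. Since both are finite sums running over the same index range $\tilde{n}=1,\dots,\tilde{N}$ (even though one uses the Laguerre nodes/weights $(\epsilon_{\tilde{n}},w_{\tilde{n}})$ and the other the Legendre nodes/weights $(\chi_{\tilde{n}},q_{\tilde{n}})$), I can factor the common $1/\ln 2$ out front and merge the two summations into a single $\sum_{\tilde{n}=1}^{\tilde{N}}$ whose summand is the sum of the corresponding $\mathrm{r}_1$ and $\mathrm{r}_2$ contributions. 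Dividing this combined numerator by $P_\mathrm{tot}$ then reproduces \eqref{eq-see} verbatim.

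The derivation is essentially bookkeeping, so rather than a genuine obstacle the only point needing care is the legitimacy of the index alignment: the two ASC terms are independently valid quadrature approximations, each with its own rule, and collapsing them under one summation sign is purely notational, which requires selecting the same order $\tilde{N}$ for both rules. I would state explicitly that this shared $\tilde{N}$ then governs the complexity--accuracy trade-off of both components at once, and that no approximation beyond those already present in \eqref{eq-asc-v1} and \eqref{eq-asc-v2} is introduced when forming the ratio, so the resulting $\mathcal{E}_\mathrm{s}$ inherits exactly the accuracy of the underlying ASC expressions.
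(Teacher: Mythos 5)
Your proposal is correct and matches the paper exactly: the paper gives no separate proof for this proposition precisely because \eqref{eq-see} follows by substituting \eqref{eq-asc-v1} and \eqref{eq-asc-v2} into $\mathcal{E}_\mathrm{s}=\left(\bar{C}_\mathrm{s}^{\mathrm{r}_1}+\bar{C}_\mathrm{s}^{\mathrm{r}_2}\right)/P_\mathrm{tot}$ from \eqref{eq-ee-z}, factoring out $1/\ln 2$ and merging the two order-$\tilde{N}$ quadrature sums under one index. Your observation that this introduces no approximation beyond those already in the two ASC expressions, provided the same $\tilde{N}$ is used for both rules, is the right (and only) point of care.
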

\section{Numerical Results}\label{sec-num}
Here, we assess the system performance in terms of the ASC, SOP, and SEE,  presenting numerical results which are double-checked by the Monte Carlo simulation. To this end, we configure the simulation parameters as $d_{v_\mathrm{t}R_1}=d_{R_2v_\mathrm{r_1}}=d_{R_2v_\mathrm{e}}=5$m, $d_{R_2v_\mathrm{r_2}}=15$m, $d_{R_1R_2}=100$m, $\delta^2_\mathrm{r}=-70$dBm, $\delta^2_\mathrm{e}=-20$dBm, $\kappa=2.1$, $R_\mathrm{s}=0.1$bits, $p_\mathrm{r_1}=0.3$, $p_\mathrm{r_2}=0.7$, $\alpha=1.2$, $P_\mathrm{R_1}=P_\mathrm{R_2}=P_\mathrm{r_1}=P_\mathrm{r_2}=10$dBm, $m_1\in\{2,4\}$, $m_2\in\{3,10\}$,  $M_1=M_2=M$, \textcolor{black}{ and $\tilde{N}=4.$}

Fig. \ref{fig_asc_power} illustrates the behaviour of the ASC in terms of the transmit power $P$ for both NOMA receiver vehicles $v_\mathrm{r_1}$ and $v_\mathrm{t_2}$ under Fisher-Snedecor $\mathcal{F}$ fading channels. Moreover, the asymptotic behaviour of ASC is presented, which is perfectly match with the exact analytical results in the high SNR regime. 
It can be seen that increasing $P$ provides more ASC for the strong vehicle $v_\mathrm{t_1}$ compared with the weak vehicle  $v_\mathrm{t_2}$, where this advantage becomes increasingly pronounced in the high transmit power. In other words, we can see that as $P$ increases,  $\bar{C}_\mathrm{s}^\mathrm{r_1}$ continuously grows but $\bar{C}_\mathrm{s}^\mathrm{r_2}$ raises first and then  converges to a constant value of $P$. This behaviour is mainly because of SIC in the NOMA principle, where the strong vehicle $v_\mathrm{t_1}$ which takes advantage from SIC has a larger SINR value compared with the weak vehicle $v_\mathrm{t_2}$. However, we can observe that the ASC performance of  the legitimate receiver vehicle pair, i.e., $\bar{C}_\mathrm{s}$, improves as $P$ increases since the main channel conditions becomes better. The performance of ASC versus the number of dual RIS elements $M$ for a given value of the transmit power $P=20$dBm is presented in Fig. \ref{fig_asc_m1}. Although the CLT
approximation is exploited to characterize the small-scale
fading channels, we can observe that the analytical results are perfectly match  with the simulation marks even when $M$ is small. We can clearly see that increasing $M$ can remarkably improve the ASC performance for both receiver vehicles because this increment leads to an improvement in spatial diversity. However, it is observed that raising $M$ does not affect the ASC of $v_\mathrm{t_2}$ after a certain value (e.g., $M=50$) and $\bar{C}_\mathrm{s}^\mathrm{r_2}$  converges to a constant value. The reason behind this behaviour is that when $M$ monotonically increases, i.e., $M\rightarrow\infty$, the SINR of weak receiver vehicle $v_\mathrm{t_2}$ achieves a constant value, i.e., $\gamma_{{\mathrm{r}}_2}\rightarrow\frac{p_\mathrm{r_2}}{p_\mathrm{r_1}}$. 
\begin{figure}[!t]
	\centering
	\includegraphics[width=0.75\columnwidth]{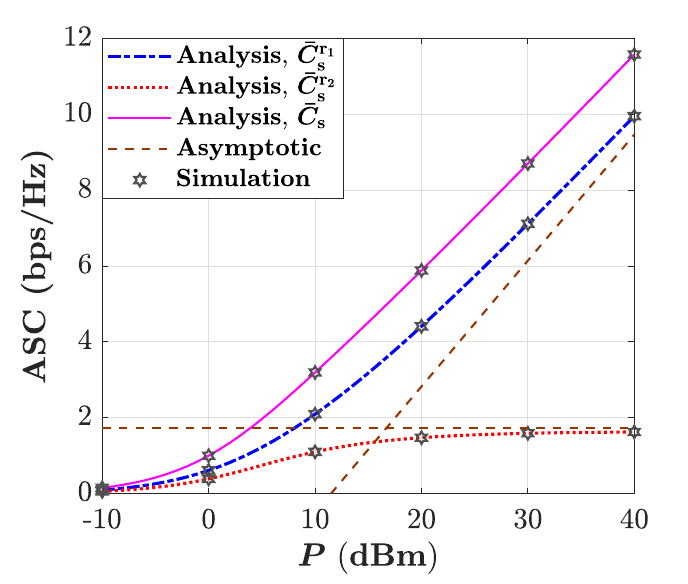}
	\caption{ASC versus the transmit power $P$ when $M=50$.}\vspace{0cm}
	\label{fig_asc_power}
\end{figure}
\begin{figure}[!t]
	\centering
	\includegraphics[width=0.75\columnwidth]{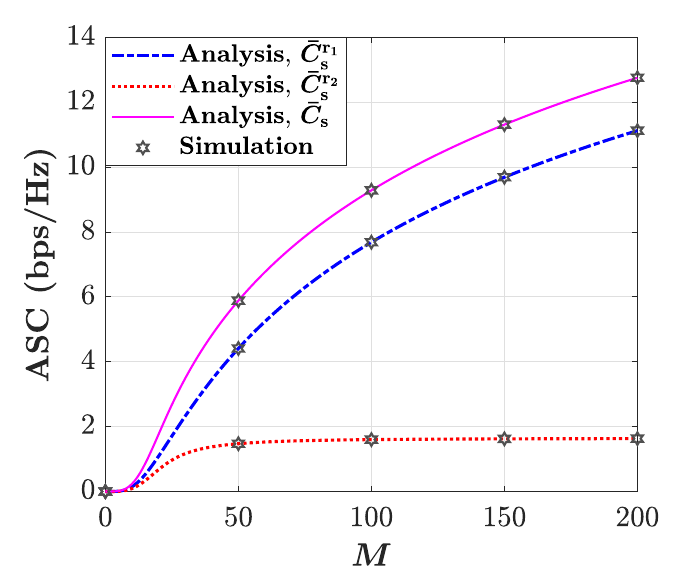}
	\caption{ASC versus the number of dual RIS elements  $M$ when $P=20$dBm.}\vspace{0cm}
	\label{fig_asc_m1}
\end{figure}

Fig. \ref{fig_sop_power} depicts the performance of the SOP in terms of the transmit power $P$ for selected values of $M$ in the considered dual RIS-aided NOMA communication. As expected, we can see that the SOP for both receiver vehicles enhances as $M$ grows since the spatial diversity improves. Further, we can observe that the SOP performance of both strong and weak receiver vehicles initially enhances and then becomes saturated in the medium-to-high transmit power regime. The main reason behind such a behavior for the strong vehicle $v_\mathrm{t_1}$ is that increasing $P$ can simultaneously enhance the received SNR at both $v_\mathrm{t_1}$  and eavesdropper vehicle $v_\mathrm{e}$. Moreover, the reason for the SOP behavior of the weak receiver vehicle $v_\mathrm{t_2}$ is that as $P$ constantly grows, the interference inflicted by $v_\mathrm{t_1}$ over $v_\mathrm{t_2}$ increases, which can dominate the SOP of $v_\mathrm{t_2}$. The impact of fading parameters $m_1$ and $m_2$ on the SOP performance under Fisher-Snedecor $\mathcal{F}$ fading channels is illustrated in Fig. \ref{fig_sop_power_fade}. In the medium-to-high transmit power regime, we can see that as the fading parameters $m_1$ and $m_2$ increase, i.e., the shadowing and fading are less severe, the SOP performance of both weak and strong receiver vehicles improves, which is reasonable since light (heavy) fading decreases (increases) the chance of the received signal dropping below the threshold. Furthermore, we can observe that increasing the fading parameters alone cannot improve the SOP performance in the low transmit power regime. This is because in the low SNR regime, the presence of noise is significant relative to the weak signal, and the channel capacity is basically limited by the weak signal; thereby, even if the shadowed fading severity decreases, the signal may still be too weak to dominate the noise, especially in deep fade conditions.
\begin{figure}[!t]
	\centering
	\includegraphics[width=0.75\columnwidth]{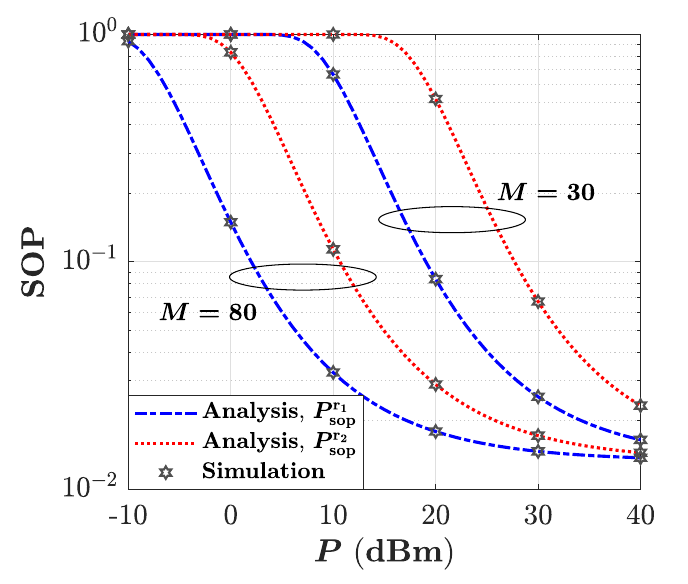}
	\caption{SOP versus the transmit power $P$ for selected values of $M$.}\vspace{0cm}
	\label{fig_sop_power}
\end{figure}

Fig. \ref{fig_ee_m}-(a) reveals the performance of SEE in terms of the number of dual RIS elements $M$ for a given value of the transmit power $P=20$dBm. We can observe that increasing $M$ initially provides more SEE and then steadily reduces the performance of SEE. 
This trend occurs because as $M$ increases, the capability of RISs to control and manipulate the propagation of radio waves improves, which can increase the signal quality; thereby, the SEE performance ameliorates. However, by continuing the increment of $M$, the benefits become limited, and the SEE reduces since the required energy to control the additional elements can overcome the achieved advantages. The behavior of the SEE versus the transmit power $P$ for a given value of $M$ is presented in Fig. \ref{fig_ee_m}-(b). It is clearly seen that the SEE performance initially enhances to its maximum point and then weakens as the transmit power $P$ grows. This behavior is mainly because at first the ASC can significantly dominate the power consumption before reaching the extreme point, but as $P$ becomes larger, the ASC cannot overcome the power consumption anymore; consequently, the SEE drops after the extreme point. 
\begin{figure}[!t]
	\centering
	\includegraphics[width=0.75\columnwidth]{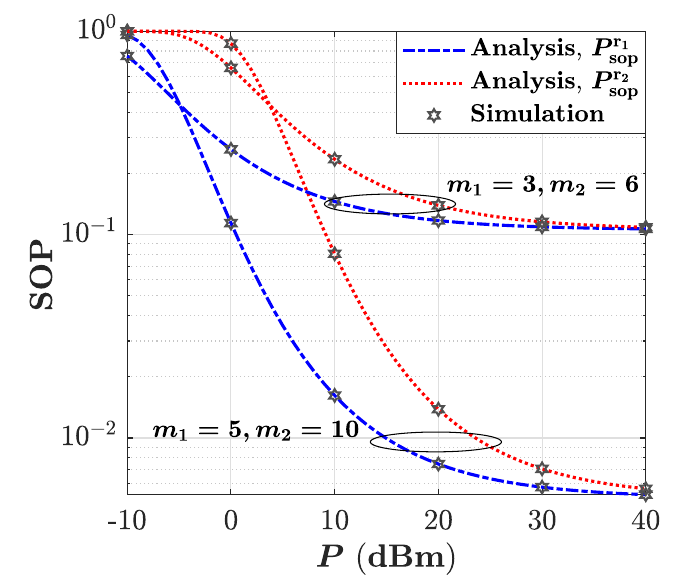}
	\caption{SOP versus the transmit power $P$ for selected values of fading parameters $m_1$ and $m_2$ when $M=80$.  }\vspace{0cm}
	\label{fig_sop_power_fade}
\end{figure}
\begin{figure}[!t]
	\centering
	\includegraphics[width=0.9\columnwidth]{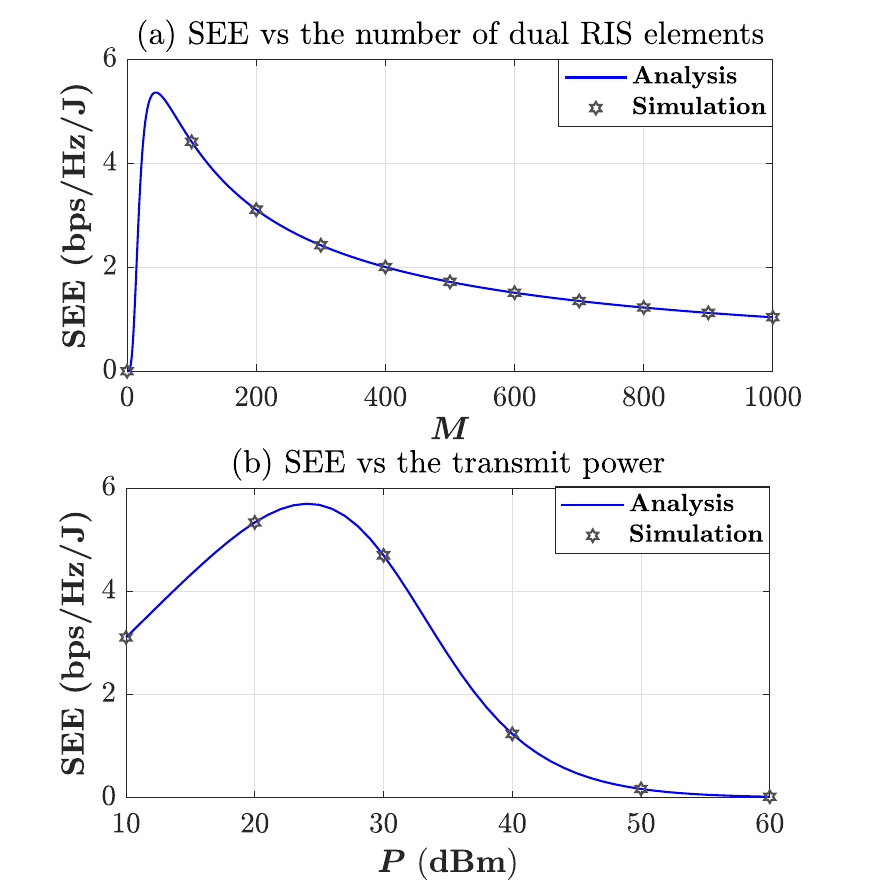}
	\caption{SEE versus (a) the number of dual RIS elements $M$ when $P=20$dBm and (b) the average transmit power $P$ when $M=50$. }\vspace{0cm}
	\label{fig_ee_m}
\end{figure}
\section{Conclusion}\label{sec-con}
This paper examined the performance of PLS over dual RIS-aided V2V NOMA communication systems. Specifically, we assumed that a transmitter vehicle aims to send confidential information to the legitimate receiver vehicles with the help of two RISs, which are positioned in proximity to the transmitter and receivers, respectively, while an eavesdropper vehicle tries to decode the desired signal. 
By assuming that the fading channel between the RISs follows the Fisher-Snedecor $\mathcal{F}$ distribution, we first provided the CDF of SNRs by using the CLT. Then, to assess the secrecy performance, we obtained the closed-form expressions of the ASC, SOP, and SEE by exploiting the Gauss-Laguerre quadrature and the Gaussian quadrature methods. Further, to obtain deeper understanding into the system behavior, we derived the asymptotic expression of the ASC in the high SNR conditions. Eventually, we validated the accuracy of the analytical derivations by Monte Carlo simulation, where the numerical results revealed that applying dual RIS to vehicular networks can remarkably enhance the reliability and security of the transmission in ITS.  
\appendices
\section{Proof of Proposition \ref{pro-asc1}}\label{app-asc1}
Regarding the definition of ASC in \eqref{eq-asc}, the ASC for the strong receiver vehicle $v_\mathrm{r_1}$ can be defined as
\begin{align}
\bar{C}_\mathrm{s}^{\mathrm{r}_1}&=\int_0^\infty\int_0^{\gamma_{{\mathrm{r}}_1}} \log_2\left(\frac{1+\gamma_{{\mathrm{r}}_1}}{1+\gamma_{{\mathrm{e}}}}\right)f_{\gamma_{{\mathrm{r}}_1}}\left(\gamma_{{\mathrm{r}}_1}\right)f_{\gamma_{{\mathrm{e}}}}\left(\gamma_{{\mathrm{e}}}\right)d\gamma_{{\mathrm{e}}}d\gamma_{{\mathrm{r}_1}}\\\notag
&=\int_0^\infty\int_0^\infty\log_2\left(1+\gamma_{{\mathrm{r}}_1}\right)f_{\gamma_{{\mathrm{r}}_1}}\left(\gamma_{{\mathrm{r}}_1}\right)f_{\gamma_{{\mathrm{e}}}}\left(\gamma_{{\mathrm{e}}}\right)d\gamma_{{\mathrm{e}}} d\gamma_{{\mathrm{r}}_1}\\ \notag
&\quad-\int_0^\infty \log_2\left(1+\gamma_{{\mathrm{r}}_1}\right)f_{\gamma_{{\mathrm{r}}_1}}\left(\gamma_{{\mathrm{r}}_1}\right)\bar{F}_{\gamma_{{\mathrm{e}}}}\left(\gamma_{{\mathrm{r}_1}}\right)d\gamma_{{\mathrm{r}_1}} \\
&\quad-\int_0^\infty \log_2\left(1+\gamma_{{\mathrm{r}}_1}\right)f_{\gamma_{{\mathrm{e}}}}\left(\gamma_{{\mathrm{r}}_1}\right)\bar{F}_{\gamma_{{\mathrm{r}_1}}}\left(\gamma_{{\mathrm{r}_1}}\right)d\gamma_{{\mathrm{r}_1}}\\\notag
&=\frac{1}{\ln 2}\int_0^\infty \frac{\bar{F}_{\gamma_{{\mathrm{r}}_1}}\left(\gamma_{{\mathrm{r}}_1}\right)}{1+\gamma_{{\mathrm{r}}_1}}d\gamma_{{\mathrm{r}}_1}\\
&\quad-\frac{1}{\ln 2}\int_0^\infty \frac{\bar{F}_{\gamma_{{\mathrm{r}}_1}}\left(\gamma_{{\mathrm{r}}_1}\right)\bar{F}_{\gamma_{{\mathrm{e}}}}\left(\gamma_{{\mathrm{r}}_1}\right)}{1+\gamma_{{\mathrm{r}}_1}}d\gamma_{{\mathrm{r}}_1}, \label{eq-app1}
\end{align}
where $\bar{F}_{\gamma_{\mathrm{n}}}\left(\gamma_{\mathrm{n}}\right)$ is the complementary CDF (CCDF) of $\gamma_{\mathrm{n}}$. Besides, the first and second terms in \eqref{eq-app1} denote the ASC without eavesdropping and the ASC loss due to the eavesdropper vehicle $v_\mathrm{e}$, respectively. Thus, the ASC of $v_\mathrm{r_1}$ can be re-expressed as
 \begin{align}
	&\bar{C}_\mathrm{s}^{\mathrm{r}_1}=\frac{1}{\ln 2}\int_0^\infty \frac{F_{\gamma_\mathrm{e}}\left(\gamma_{{\mathrm{r}}_1}\right)\bar{F}_{\gamma_{\mathrm{r}_1}}\left(\gamma_{{\mathrm{r}}_1}\right)}{1+\gamma_{{\mathrm{r}}_1}}d\gamma_{{\mathrm{r}}_1}\\
	&=\frac{2}{\ln 2}\int_0^\infty \frac{\left(1-\mathrm{e}^{-\lambda_\mathrm{e}\gamma_{{\mathrm{r}}_1}}\right)}{1+\gamma_{{\mathrm{r}}_1}}\mathrm{erf}\left(\frac{\sqrt{\gamma_{\mathrm{r}_1}}-\mu_U\sqrt{\bar{\gamma}A_1}}{\sigma_U^2\sqrt{2\bar{\gamma}A_1}}\right)d\gamma_{{\mathrm{r}}_1}. \label{eq-app2}
\end{align}
However, it is not straightforward to compute the above integral in a closed-form expression. Instead, we exploit the Gauss-Laguerre quadrature method, which is defined as the following lemma.
\begin{lemma}\label{lemma1}
The Gauss-Laguerre quadrature is define as \cite{abromowitz1972handbook}
\begin{align}
\int_0^\infty \mathrm{e}^{-x}\Lambda(x)dx\approx\sum_{\tilde{n}=1}^{\tilde{N}}w_{\tilde{n}}\Lambda\left(\epsilon_{\tilde{n}}\right)
\end{align}
in which $\epsilon_{\tilde{n}}$ is the $\tilde{n}$-th root of Laguerre polynomial $L_{\tilde{N}}\left(\epsilon_{\tilde{n}}\right)$ and $w_{\tilde{n}}=\frac{\epsilon_{\tilde{n}}}{2\left(\tilde{N}+1\right)^2L^2_{\tilde{N}+1}\left(\epsilon_{\tilde{n}}\right)}$.
\end{lemma}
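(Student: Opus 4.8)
The plan is to establish Lemma~\ref{lemma1} as the specialization of the classical Gaussian quadrature construction to the weight $w(x)=\mathrm{e}^{-x}$ on the half-line $[0,\infty)$. First I would recall the defining optimality of an $\tilde{N}$-point Gaussian rule: among all rules $\int_0^\infty \mathrm{e}^{-x}\Lambda(x)\,dx\approx\sum_{\tilde{n}=1}^{\tilde{N}}w_{\tilde{n}}\Lambda(\epsilon_{\tilde{n}})$ using $\tilde{N}$ nodes, the one that is \emph{exact} for every polynomial of degree at most $2\tilde{N}-1$ is obtained by taking the nodes $\epsilon_{\tilde{n}}$ to be the zeros of the degree-$\tilde{N}$ polynomial orthogonal to all lower-degree polynomials under the inner product $\langle f,g\rangle=\int_0^\infty \mathrm{e}^{-x}f(x)g(x)\,dx$. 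For the exponential weight these orthogonal polynomials are precisely the Laguerre polynomials $L_{\tilde{N}}$, so the nodes are the roots of $L_{\tilde{N}}$, which are known to be real, simple, and positive. This fixes the abscissae in the statement.

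Next I would pin down the weights. Once the nodes are chosen, the weights are determined by requiring that the rule integrate the Lagrange interpolation basis $\ell_{\tilde{n}}(x)=\prod_{m\neq \tilde{n}}\frac{x-\epsilon_m}{\epsilon_{\tilde{n}}-\epsilon_m}$ exactly, i.e. $w_{\tilde{n}}=\int_0^\infty \mathrm{e}^{-x}\ell_{\tilde{n}}(x)\,dx$. Equivalently, the Christoffel--Darboux identity recasts the same number as $w_{\tilde{n}}=\left(\sum_{k=0}^{\tilde{N}-1}[L_k(\epsilon_{\tilde{n}})]^2\right)^{-1}$, and evaluating the Christoffel--Darboux kernel at a root of $L_{\tilde{N}}$ (together with the known leading coefficients of the Laguerre polynomials) collapses this to the compact form $w_{\tilde{n}}=\left(\epsilon_{\tilde{n}}\,[L_{\tilde{N}}'(\epsilon_{\tilde{n}})]^2\right)^{-1}$.

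The final step is to convert this Christoffel form into the expression displayed in the lemma and used in \eqref{eq-w}. I would invoke the three-term recurrence $(\tilde{N}+1)L_{\tilde{N}+1}(x)=(2\tilde{N}+1-x)L_{\tilde{N}}(x)-\tilde{N}L_{\tilde{N}-1}(x)$ together with the derivative identity $x L_{\tilde{N}}'(x)=\tilde{N}L_{\tilde{N}}(x)-\tilde{N}L_{\tilde{N}-1}(x)$. Evaluating both at a node $\epsilon_{\tilde{n}}$, where $L_{\tilde{N}}(\epsilon_{\tilde{n}})=0$, yields $L_{\tilde{N}}'(\epsilon_{\tilde{n}})=\frac{\tilde{N}+1}{\epsilon_{\tilde{n}}}L_{\tilde{N}+1}(\epsilon_{\tilde{n}})$; substituting this into the Christoffel form reduces it to $w_{\tilde{n}}\propto\frac{\epsilon_{\tilde{n}}}{(\tilde{N}+1)^2\,[L_{\tilde{N}+1}(\epsilon_{\tilde{n}})]^2}$, which is exactly the structure stated, the precise multiplicative constant being dictated by the adopted normalization of $L_{\tilde{N}}$. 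The approximation sign would then be justified by the standard Gauss--Laguerre remainder $R_{\tilde{N}}=\frac{(\tilde{N}!)^2}{(2\tilde{N})!}\Lambda^{(2\tilde{N})}(\xi)$ for some $\xi\in(0,\infty)$, confirming exactness on polynomials of degree below $2\tilde{N}$ and controlling the error for smooth $\Lambda$.

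I expect the main obstacle to be the weight reduction in the last step: the Christoffel form and the target form look rather different, and bridging them cleanly requires selecting the right pair of Laguerre identities and being scrupulous about normalization conventions (monic versus orthonormal $L_{\tilde{N}}$), since an inconsistent convention is precisely what perturbs the constant in front of $(\tilde{N}+1)^2$. A secondary caveat worth flagging is that in the paper this rule is applied to the integrand of \eqref{eq-app2}, which is neither a polynomial nor of the literal shape $\mathrm{e}^{-x}\Lambda(x)$; it is used only after rewriting the integrand as $\mathrm{e}^{-\gamma}\!\left[\mathrm{e}^{\gamma}(\,\cdot\,)\right]$, so the eventual numerical accuracy rests on the smoothness and decay of that bracketed factor rather than on the exactness property that underlies the lemma itself.
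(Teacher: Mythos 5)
The paper contains no proof of this lemma at all: it is quoted as a known numerical-analysis result from \cite{abromowitz1972handbook} and used as a black box in Appendices A and D. Your proposal therefore takes a genuinely different (and more complete) route by actually deriving the rule: nodes as the zeros of the degree-$\tilde{N}$ orthogonal polynomial for the weight $\mathrm{e}^{-x}$ (real, simple, positive), weights by exact integration of the Lagrange basis, the Christoffel form $w_{\tilde{n}}=\bigl(\epsilon_{\tilde{n}}[L'_{\tilde{N}}(\epsilon_{\tilde{n}})]^2\bigr)^{-1}$, the collapse via $xL'_{\tilde{N}}(x)=\tilde{N}L_{\tilde{N}}(x)-\tilde{N}L_{\tilde{N}-1}(x)$ and the three-term recurrence giving $L'_{\tilde{N}}(\epsilon_{\tilde{n}})=\frac{\tilde{N}+1}{\epsilon_{\tilde{n}}}L_{\tilde{N}+1}(\epsilon_{\tilde{n}})$, and the $\frac{(\tilde{N}!)^2}{(2\tilde{N})!}\Lambda^{(2\tilde{N})}(\xi)$ remainder justifying the ``$\approx$''. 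This chain is correct and is precisely the classical argument underlying the handbook entry; what it buys over the paper's bare citation is a self-contained justification together with the degree-$(2\tilde{N}-1)$ exactness and an explicit error term, neither of which the paper articulates.

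One place where you hedge more than you should: with the standard normalization $L_{\tilde{N}}(0)=1$, under which the Laguerre polynomials are \emph{orthonormal} for the weight $\mathrm{e}^{-x}$ --- the normalization your Christoffel--Darboux step $w_{\tilde{n}}=\bigl(\sum_{k=0}^{\tilde{N}-1}[L_k(\epsilon_{\tilde{n}})]^2\bigr)^{-1}$ already presupposes --- your computation pins the constant down uniquely and yields $w_{\tilde{n}}=\epsilon_{\tilde{n}}/\bigl[(\tilde{N}+1)^2L^2_{\tilde{N}+1}(\epsilon_{\tilde{n}})\bigr]$, i.e.\ constant $1$, as in \cite{abromowitz1972handbook} (eq.\ 25.4.45). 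The extra factor $2$ in the denominator of the lemma as printed is thus a typo in the paper (plausibly imported from the Gauss--Legendre weight $q_{\tilde{n}}=2/[(1-\chi_{\tilde{n}}^2)(\mathcal{P}'_{\tilde{n}}(\chi_{\tilde{n}}))^2]$ of the companion lemma, where the $2$ is correct); no consistent renormalization of $L_{\tilde{N}}$ produces it while leaving the node definition and orthonormality intact, so your derivation in fact exposes an error that propagates into \eqref{eq-w}, \eqref{eq-asc-v1}, \eqref{eq-sop-v1}, \eqref{eq-sop-v2}, and \eqref{eq-see} as a harmless global rescaling of the quadrature weights. Your closing caveat is also apt: the lemma's exactness guarantee applies after the integrand of \eqref{eq-app2} is rewritten as $\mathrm{e}^{-\gamma}\bigl[\mathrm{e}^{\gamma}(\cdot)\bigr]$, so the accuracy in the Propositions rests on the smoothness of the bracketed factor (note, incidentally, that \eqref{eq-asc-v1} shows $\mathrm{e}^{-\epsilon_{\tilde{n}}}$ where this rewriting should produce $\mathrm{e}^{+\epsilon_{\tilde{n}}}$, a separate slip in the paper's application of the lemma, not in your argument).
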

Therefore, by applying the Gauss-Laguerre quadrature to \eqref{eq-app2} and doing some simplifications, \eqref{eq-asc-v1} is derived and the proof is completed. 
\section{Proof of Proposition \ref{pro-asc2}} \label{app-asc2}
According to the considered system model, the ASC for the weak receiver vehicle $v_\mathrm{r_2}$ is zero when $p_\mathrm{r_2}-\gamma_{{\mathrm{r}}_2}p_\mathrm{r_1}\leq 0$. Hence, following the same approach in proving the ASC of $v_\mathrm{r_1}$, the ASC of $v_\mathrm{r_2}$ can be re-written as  
\begin{align}
	\bar{C}_\mathrm{s}^{\mathrm{r}_2}
	&=\frac{1}{\ln 2}\int_0^\frac{p_\mathrm{r_2}}{p_\mathrm{r_1}} \frac{\bar{F}_{\gamma_{{\mathrm{r}}_2}}\left(\gamma_{{\mathrm{r}}_2}\right)}{1+\gamma_{{\mathrm{r}}_2}}d\gamma_{{\mathrm{r}}_2}\\
	&\quad-\frac{1}{\ln 2}\int_0^\frac{p_\mathrm{r_2}}{p_\mathrm{r_1}} \frac{\bar{F}_{\gamma_{{\mathrm{r}}_2}}\left(\gamma_{{\mathrm{r}}_2}\right)\bar{F}_{\gamma_{{\mathrm{e}}}}\left(\gamma_{{\mathrm{r}}_2}\right)}{1+\gamma_{{\mathrm{r}}_2}}d\gamma_{{\mathrm{r}}_2}\\
	&= \frac{1}{\ln 2}\int_0^\frac{p_\mathrm{r_2}}{p_\mathrm{r_1}} \frac{F_{\gamma_\mathrm{e}}\left(\gamma_{{\mathrm{r}}_2}\right)}{1+\gamma_{{\mathrm{r}}_2}}\left[1-F_{\gamma_{\mathrm{r}_2}}\left(\gamma_{{\mathrm{r}}_2}\right)\right]d\gamma_{{\mathrm{r}}_2}.\label{eq-app3}
\end{align}
Similarly, it is difficult to solve the integral in \eqref{eq-app3}, and thus, we utilize the following lemma to obtain the ASC of $v_\mathrm{r_2}$.
\begin{lemma}
	The Gaussian quadrature is define as \cite{abromowitz1972handbook}
	\begin{align}
		\int_a^b \Delta(x)dx\approx\frac{b-a}{2}\sum_{\tilde{n}=1}^{\tilde{N}}w_{\tilde{n}}\Delta\left(\frac{b-a}{2}\tilde{n}+\frac{b+a}{2}\right)
	\end{align}
	in which $\chi_{\tilde{n}}$ is the $\tilde{n}$-th root of  Legendre polynomial $\mathcal{P}_{\tilde{n}}\left(\chi_{\tilde{n}}\right)$ and $q_{\tilde{n}}=\frac{2}{\left(1-\chi_{\tilde{n}}^2\right)\left[\mathcal{P}'_{\tilde{n}}\left(\chi_{\tilde{n}}\right)\right]^2}$.
\end{lemma}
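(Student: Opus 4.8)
The plan is to reduce the claim on a general interval $[a,b]$ to the canonical Gauss--Legendre rule on $[-1,1]$ and then to establish the latter from scratch. First I would apply the affine substitution $x=\frac{b-a}{2}t+\frac{b+a}{2}$, which gives $dx=\frac{b-a}{2}\,dt$ and maps $[-1,1]$ onto $[a,b]$; this immediately produces the prefactor $\frac{b-a}{2}$ and the shifted evaluation points $\frac{b-a}{2}\chi_{\tilde n}+\frac{b+a}{2}$ appearing in the statement (correcting the evident typo whereby the raw index $\tilde n$ should read $\chi_{\tilde n}$). It therefore suffices to prove the reference rule $\int_{-1}^{1}g(t)\,dt\approx\sum_{\tilde n=1}^{\tilde N}q_{\tilde n}\,g(\chi_{\tilde n})$ with the stated weights.

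Next I would invoke the two structural facts about the Legendre polynomials $\{\mathcal{P}_k\}$: they are orthogonal on $[-1,1]$ with unit weight, $\int_{-1}^{1}\mathcal{P}_j\mathcal{P}_k\,dt=0$ for $j\neq k$, and $\mathcal{P}_{\tilde N}$ has exactly $\tilde N$ simple real roots $\chi_1,\dots,\chi_{\tilde N}$ in $(-1,1)$. Taking these roots as the quadrature nodes, I define the weights as the integrals of the Lagrange basis polynomials, $q_{\tilde n}=\int_{-1}^{1}\ell_{\tilde n}(t)\,dt$ with $\ell_{\tilde n}(t)=\prod_{m\neq \tilde n}\frac{t-\chi_m}{\chi_{\tilde n}-\chi_m}$. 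By construction this interpolatory rule integrates every polynomial of degree $\leq \tilde N-1$ exactly.

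The conceptual crux is to upgrade exactness from degree $\tilde N-1$ to degree $2\tilde N-1$. For any polynomial $p$ with $\deg p\leq 2\tilde N-1$ I would perform polynomial division by $\mathcal{P}_{\tilde N}$, writing $p=s\,\mathcal{P}_{\tilde N}+r$ with $\deg s,\deg r\leq \tilde N-1$. Orthogonality kills the first term in the integral, $\int_{-1}^{1}s\,\mathcal{P}_{\tilde N}=0$, while the nodes annihilate it in the sum since $\mathcal{P}_{\tilde N}(\chi_{\tilde n})=0$; both the integral and the quadrature thus collapse to the degree-$(\tilde N-1)$ quantities $\int_{-1}^{1}r$ and $\sum_{\tilde n}q_{\tilde n}r(\chi_{\tilde n})$, which coincide by the previous step. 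This gives exactness on all polynomials of degree $\leq 2\tilde N-1$, and a standard interpolation-remainder bound then yields the approximation symbol $\approx$ for smooth $\Delta$.

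The remaining and most technical step, which I expect to be the main obstacle, is to reduce the weight integral to the closed form $q_{\tilde n}=\frac{2}{(1-\chi_{\tilde n}^2)[\mathcal{P}'_{\tilde N}(\chi_{\tilde n})]^2}$; this is a purely analytic computation with Legendre identities rather than a conceptual difficulty. Rewriting $\ell_{\tilde n}(t)=\frac{\mathcal{P}_{\tilde N}(t)}{(t-\chi_{\tilde n})\mathcal{P}'_{\tilde N}(\chi_{\tilde n})}$, I would integrate against unit weight and simplify using the Legendre differential equation $\big[(1-t^2)\mathcal{P}'_{\tilde N}\big]'+\tilde N(\tilde N+1)\mathcal{P}_{\tilde N}=0$, the normalization $\int_{-1}^{1}\mathcal{P}_{\tilde N}^2\,dt=\frac{2}{2\tilde N+1}$, and the recurrence relating $\mathcal{P}'_{\tilde N}$ to $\mathcal{P}_{\tilde N\pm1}$. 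Evaluating the resulting identities at $t=\chi_{\tilde n}$, where $\mathcal{P}_{\tilde N}(\chi_{\tilde n})=0$, produces the factor $(1-\chi_{\tilde n}^2)[\mathcal{P}'_{\tilde N}(\chi_{\tilde n})]^2$ in the denominator and completes the derivation of the weights, which matches the cited handbook form and finishes the proof.
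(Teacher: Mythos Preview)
Your derivation is correct and standard: the affine change of variables reduces to $[-1,1]$, the choice of Legendre zeros as nodes combined with polynomial division by $\mathcal{P}_{\tilde N}$ yields exactness up to degree $2\tilde N-1$, and the Christoffel--Darboux/Legendre identities collapse $\int_{-1}^{1}\ell_{\tilde n}(t)\,dt$ to the stated closed form. You also correctly flag the typo in the displayed formula (the argument should be $\frac{b-a}{2}\chi_{\tilde n}+\frac{b+a}{2}$, not $\frac{b-a}{2}\tilde n+\frac{b+a}{2}$; similarly the weight symbol $w_{\tilde n}$ in the display should read $q_{\tilde n}$).

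The paper, however, does not prove this lemma at all: it is stated purely as a quotation from \cite{abromowitz1972handbook} and then immediately applied to the integral~\eqref{eq-app3} to obtain the ASC expression for $v_{\mathrm{r}_2}$. So your approach is not ``different'' so much as infinitely more detailed --- you supply the full classical construction where the authors simply invoke the handbook. What you gain is self-containment and a clear understanding of why the rule has degree $2\tilde N-1$; what the paper gains is brevity, since the lemma is a tool rather than a contribution. For the purposes of this paper no proof is expected or needed.
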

Now, by applying the Gaussian quadrature to \eqref{eq-app3} and doing some simplifications, the proof is accomplished.
\section{Proof of Corollary \ref{cor-asc1}}\label{app-asc-asy1}
Regarding the definition of ASC, when $\bar{\gamma}\rightarrow\infty$,  \eqref{eq-asc} for $v_\mathrm{t_1}$ can be simplified as 
\begin{align}
&\bar{C}_{\mathrm{s}, \infty}^{\mathrm{r}_1}=\mathbb{E}\left[C_\mathrm{r_1}\right]-\mathbb{E}\left[C_\mathrm{e}\right],
\end{align}
in which the first term can be approximated as follows
\begin{align}
	\mathbb{E}\left[C_\mathrm{r_1}\right]\approx\mathbb{E}\left[\log_2\left( \bar{\gamma}A_1U^2\right)\right]=\log_2\left(\bar{\gamma}A_1\mu_{U^2}\right),
	\end{align}
	where $\mu_{U^2}=\sigma_U^2+\mu_U^2$ and it can be obtained as \eqref{eq-mu2} by considering \eqref{eq-var} and \eqref{eq-mean}. The second term can be estimated as
\begin{align}
	\mathbb{E}\left[C_\mathrm{e}\right]\approx\frac{1}{\ln 2}\int_0^\infty \frac{\bar{F}_{\gamma_{{\mathrm{e}}}}\left(\gamma_{{\mathrm{r}}_1}\right)}{1+\gamma_{{\mathrm{r}}_1}}d\gamma_{{\mathrm{r}}_1},\label{app-3-1}
\end{align} 
where the integral in \eqref{app-3-1} can be computed by applying the integration by parts method, and thus, the proof is completed. 

Following the same method, \eqref{eq-asc} at the high SNR regime for $v_\mathrm{r_2}$ can be given by $\bar{C}_{\mathrm{s}, \infty}^{\mathrm{r}_2}=\mathbb{E}\left[C_\mathrm{r_2}\right]-\mathbb{E}\left[C_\mathrm{e}\right]$, in which the first term can be approximated as $\mathbb{E}\left[C_\mathrm{r_2}\right]\approx\log_2\left(1+\frac{p_\mathrm{r_2}}{p_\mathrm{r_1}}\right)$ and the second term is similar to \eqref{app-3-1} by replacing $\gamma_{{\mathrm{r}}_2}$ instead of $\gamma_{{\mathrm{r}}_1}$. Thus, by solving the obtained integral for $\mathbb{E}\left[C_\mathrm{e}\right]$ with the help of the integration by parts method, the proof is accomplished. \vspace{-1cm}

\section{Proof of Proposition \ref{pro-sop1}} \label{app-sop1}
By substituting the instantaneous secrecy capacity of strong receiver vehicle $v_\mathrm{r_1}$ into \eqref{eq-sop}, the SOP of $v_\mathrm{r_1}$ can be expressed as
\begin{align}
	&P_\mathrm{sop}^{\mathrm{r}_1}=\Pr\left(\log_2\left(\frac{1+\gamma_{\mathrm{r}_1}}{1+\gamma_{\mathrm{e}}}\right)\leq R_\mathrm{s}\right)\\
	&=\Pr\left(\gamma_{\mathrm{r}_1}\leq R_\mathrm{th}\gamma_{\mathrm{e}}+\bar{R}_\mathrm{th}\right)\\
	&=\int_0^\infty F_{\gamma_\mathrm{r_1}}\left(R_\mathrm{th}\gamma_{\mathrm{e}}+\bar{R}_\mathrm{th}\right)f_{\gamma_{\mathrm{e}}}\left(\gamma_\mathrm{e}\right)d\gamma_{\mathrm{e}} \label{eq-app4-1}\\ \notag 
	&= \frac{\lambda_{\mathrm{e}}}{2}\int_0^\infty \mathrm{e}^{-\lambda_\mathrm{e}\gamma_{\mathrm{e}}}\\
	&\quad \times \left[1+\mathrm{erf}\left(\frac{\sqrt{R_\mathrm{th}\gamma_{\mathrm{e}}+\bar{R}_\mathrm{th}}-\sqrt{\bar{\gamma}A_1}\mu_U}{\sigma_U^2\sqrt{2\bar{\gamma}A_1}}\right)\right]d\gamma_{\mathrm{e}}, \label{eq-app4}
\end{align}
Next, by applying the Gauss-Laguerre quadrature to \eqref{eq-app4}, the SOP of $v_\mathrm{r_1}$ can be derived as \eqref{eq-sop-v1} and the proof will be accomplished. 
\section{Proof of Proposition \ref{pro-sop2}}\label{app-sop2}
Regarding the definition of SOP, we can express the SOP of weak receiver vehicle $v_\mathrm{r_2}$ as follows
\begin{align}
	P_\mathrm{sop}^{\mathrm{r}_2}&=\Pr\left(\log_2\left(\frac{1+\gamma_{\mathrm{r}_2}}{1+\gamma_{\mathrm{e}}}\right)\leq R_\mathrm{s}\right)\\
	&=\Pr\left(\gamma_{\mathrm{r}_2}\leq R_\mathrm{th}\gamma_{\mathrm{e}}+\bar{R}_\mathrm{th}\right)\\
	&=\int_0^\infty F_{\gamma_\mathrm{r_2}}\left(R_\mathrm{th}\gamma_{\mathrm{e}}+\bar{R}_\mathrm{th}\right)f_{\gamma_{\mathrm{e}}}\left(\gamma_\mathrm{e}\right)d\gamma_{\mathrm{e}}\\
	&=\frac{\lambda_{\mathrm{e}}}{2}\int_0^\infty \mathrm{e}^{-\lambda_\mathrm{e}\gamma_{\mathrm{e}}}\\
	&\times \left[1+\mathrm{erf}\left(\frac{\sqrt{\frac{R_\mathrm{th}\gamma_{\mathrm{e}}+\bar{R}_\mathrm{th}}{\bar{\gamma}A_2-\bar{\gamma}A_1\left(R_\mathrm{th}\gamma_{\mathrm{e}}+\bar{R}_\mathrm{th}\right)}}-\mu_U}{\sigma_U^2\sqrt{2}}\right)\right]d\gamma_{\mathrm{e}}. \label{eq-app7}
\end{align}
Similarly, by considering the Gauss-Laguerre quadrature, the SOP of $v_\mathrm{v_2}$ can be obtained as \eqref{eq-sop-v2} and the proof is completed. 
\bibliographystyle{IEEEtran}
\bibliography{sample.bib}

\end{document}